\pgfplotsset{compat=newest}
\newcommand\offset{0.4in}
\newcommand*{\rom}[1]{\expandafter\@slowromancap\romannumeral #1@}
\newcommand\CoreDEL{Acumen-17}
\newcommand\labelE{\Delta}
\newcommand\bool{t}
\newcommand{\vb}[1][]{%
\ifthenelse{\equal{#1}{}}{v^b}{v^{#1}}%
}
\newcommand{\tf}{D_f}
\newcommand{\pf}{P_f}
\newcommand{\sbs}{s^{b_s}}
\newcommand{\dones}{w^{b_s}}
\newcommand{\vtwo}[2]{v_{#1}^{#2}}
\newcommand\wb{w^b}
\newcommand{\naturals}[0]{\ensuremath{\mathbb{N}}}
\newcommand{\domain}[1]{dom_{#1}}
\newcommand{\labels}[0]{\text{Label}}
\newcommand{\BT}[0]{\text{Binding time}}
\newcommand{\BTC}[0]{\text{Constraint}}
\newcommand\scope{l_0}
\newcommand\nlt{\sqsubseteq}
\newcommand\stat{\tb{S}}
\newcommand\dyn{\tb{D}}
\newcommand\ereduce{\hookrightarrow_{e}}
\newcommand\sreduce{\hookrightarrow_{s}}
\newcommand\btenv{\Gamma^b}
\newcommand\btenvd{\Gamma^{\dyn}}
\newcommand\tenv{\Gamma}
\newcommand\se{\rhd}
\newcommand\sigmahat{\tilde{\sigma}}
\newcommand\twosubs[2]{#1 \se #2}
\newcommand\twosigs{\twosubs{\sigma}{\sigmahat}}
\newcommand\err{\tb{err}}
\newcommand\la{\langle}
\newcommand\ra{\rangle}
\newcommand\tb[1]{\textsf{#1}}
\newcommand\type{\tau}
\newcommand\bt{b}
\newcommand\btt[1]{\type^{#1}}
\newcommand\vtype[1]{\prod\limits_{j \in 1..m}#1}
\newcommand\vtypef[1]{\prod\limits_{j \in 1..m_f}#1}
\newcommand\vbtype[2]{({\prod\limits_{j\in1..m}#2})^{#1}}
\newcommand\vbtypef[2]{({\prod\limits_{j\in1..m_f}#2})^{#1}}
\newcommand\const{k}
\newcommand\f{f}
\newcommand{\indexing}{\ensuremath{{j \in 1...m}}}
\newcommand\sequence[1]{\langle#1\rangle^{\indexing}}
\newcommand\vectoromit[1]{\langle#1\rangle}
\newcommand\sets[1]{\{#1\}^{\indexing}}
\newcommand\setsomit[1]{\{#1\}}
\newcommand\der[1]{\frac{d}{dt} #1}
\newcommand\pder[2]{\frac{\partial}{\partial #2} #1}
\newcommand{\setof}[1]{\ensuremath{\{ {#1} \}}}
\newcommand{\nequiv}{\not\equiv}
\newcommand{\done}{w^b}
\newcommand{\RNum}[1]{\uppercase\expandafter{\romannumeral #1\relax}}
\newcommand\scopeenv{\pi}
\newcommand{\labelSet}[1]{\ensuremath{Labels({#1})}}
\newcommand{\set}[1]{\ensuremath{\{{#1}\}}}
\newcommand{\setbar}[2]{\ensuremath{\{{#1}\;|\; {#2}\}}}
\newcommand{\CSets}[0]{\ensuremath{{\mathcal{C}}}}
\newcommand{\CNF}[0]{\ensuremath{NF}}
\newcommand{\CError}[0]{\ensuremath{Error}}
\newcolumntype{L}[1]{>{\raggedright\let\newline\\\arraybackslash\hspace{0pt}}m{#1}}
\newcolumntype{C}[1]{>{\centering\let\newline\\\arraybackslash\hspace{0pt}}m{#1}}
\newcolumntype{R}[1]{>{\raggedleft\let\newline\\\arraybackslash\hspace{0pt}}m{#1}}
\title{Compile-Time Extensions to Hybrid ODEs}
\author{Yingfu Zeng
\institute{Rice University, Texas, USA}
\email{yz39@rice.edu}
\and
Ferenc Bartha
\institute{Rice University, Texas, USA}
\email{Ferenc.A.Bartha@rice.edu}
\and
Walid Taha
\institute{Halmstad University, Sweden}
\email{Walid.taha@hh.se}
}
\begin{document}
\maketitle

\begin{abstract}
Reachability analysis for hybrid systems is an active area of development and has resulted in many promising prototype tools. Most of these tools allow users to express hybrid system as automata with a set of ordinary differential equations (ODEs) associated with each state, as well as rules for transitions between states. Significant effort goes into developing and verifying and correctly implementing those tools. As such, it is desirable to expand the scope of applicability tools of such as far as possible. With this goal, we show how compile-time transformations can be used to extend the basic hybrid ODE formalism traditionally supported in hybrid reachability tools such as SpaceEx or Flow*. The extension supports certain types of partial derivatives and equational constraints. These extensions allow users to express, among other things, the Euler-Lagrangian equation, and to capture practically relevant constraints that arise naturally in mechanical systems. Achieving this level of expressiveness requires using a binding time-analysis (BTA), program differentiation, symbolic Gaussian elimination, and abstract interpretation using interval analysis. Except for BTA, the other components are either readily available or can be easily added to most reachability tools. The paper therefore focuses on presenting both the declarative and algorithmic specifications for the BTA phase, and establishes the soundness of the algorithmic specifications with respect to the declarative one.
\end{abstract}

\section{Introduction}

Reachability analysis for hybrid systems \cite{Alur93} is an active area of development and has resulted in many promising prototype tools. Prominent examples of such tools include CHARON \cite{CHARON}, HyTech \cite{HyTech}, PHAVer \cite{phaver}, dReach \cite{dreach}, dReal \cite{dreal}, SpaceEx \cite{spaceex}, and Flow*\cite{flowstar}. Most of these tools allow users to express hybrid systems as automata with a set of ordinary differential equations (ODEs) associated with each state, as well as rules for transitions between states. In particular, ODEs must be in the explicit form where the left hand side of an equality has to be the derivative of a state variable. Significant effort goes into verifying and correctly implementing those tools. As such, it is desirable to expand the scope of applicability tools of such as far as possible.

\subsection{Contributions}

With this goal, we present a systematic method to translate an expressive language with partial derivatives and equations to a standard language supporting ODEs, guards, and reset maps. The method can be used to extend reachability analysis tool such as SpaceEx or Flow*.  An experimental implementation of the proposed technique is available in the freely available, open source Acumen language implementation \cite{acumenURL}. Examples illustrating the use of these extension can be found in the directory \verb|examples/04_Experimental/04_BTA|. Since both partial derivatives and equations are eliminated completely after the compile-time transformation,  the user benefits from the added expressivity but the underlying tools do not need to change. The two extensions allow the user to express, among other things, the Euler-Lagrangian equation, and to capture practically relevant constraints that arise naturally in mechanical systems. Achieving this level of expressivity requires using a binding time-analysis (BTA) \cite{Jones85,Gomard91,christensenaccurate,Moggi97}, program differentiation, symbolic Gaussian elimination, and abstract interpretation using interval analysis. Except for BTA, the other components are either readily available or can be easily added. The technical part of the paper therefore focuses on presenting both the declarative and algorithmic specifications for the BTA phase, and establishes the soundness of the algorithmic with respect to the declarative.

After reviewing related work on compile-time extensions (Section \ref{section:relatedwork}), we introduce the syntax and type system for a core differential equation language (Section \ref{section:syntax}). Then, we present a declarative specification of binding-time analysis (BTA) and a big step semantics for specialization (Section \ref{section:BTASpecialization}), along with a formal proof of type safety (Theorem 1).  We then present an algorithmic specification of the BTA that works by first generating a set of constraints and then attempting to solve them (Section \ref{section:Implementation}), and we show that this algorithmic specification is faithful to the declarative BTA (Lemma~\ref{lemma:faithful}) and always produces a unique  minimum solution that maps as much of the code as possible to static if an assignment exists (Theorem 2).  To illustrate the practical value of the formalism, we present two case studies that have been carried out using the implementation (Appendix \ref{appendix}).

\section{Related Work}\label{section:relatedwork}

Binding-time Analysis (BTA) is a static analysis traditionally supported in the offline partial evaluation of general purposes languages. It works by identifying a two-level structure in the program being analyzed, where the first level is a computation that can be done at ``partial evaluation time" (``compile time" in our case), and the second level must be left as a ``residual" that is executed at runtime. BTA has generally been studied for general purpose languages. In our setting, we study it in the context of Domain Specific Languages (DSLs) \cite{Hudak97,Mernik05,Sujeeth14} intended for modeling hybrid systems. It should also be noted that our primary purpose is to use it for extending expressivity. Partial evaluation, in contrast, is only concerned with improving the runtime performance of programs. In what follows, we elaborate on these key points.

A key idea in the work we present in this paper is that there are powerful techniques from the programming languages community that can help make reachability tools more broadly applicable. To do this, this work uses two-level languages in a novel way. To put the existing related work in context, it is useful to consider several characteristics relating to the language considered and the transformation used, namely, whether the language is domain-specific (or general purpose), whether it supports equations, whether the transformation is done at compile-time (or runtime), whether the tool performs let-insertion (to avoid code duplication), whether the language is statically typed, and whether the tool provide accurate source level error reporting.  The systems that we will consider are partial evaluation systems for C, namely, C-mix \cite{CMix} and Tempo \cite{tempo}; template  instantiation mechanisms, namely, C++ Templates \cite{metaOCamel} and Template Haskell \cite{templateHaskell}; multi-stage programming languages, namely, MetaOCaml \cite{metaOCamel} and LMS \cite{LMS}; the Verilog Preprocessor \cite{VPP}; and the hybridization technique \cite{bak}.
\begin{table*}
 \caption{Comparison of Compile-time Transformation }
\begin{center}
\small
  \begin{tabular}{ | p{20mm} | p{15mm} | p{20mm} | p{13mm} |  p{13mm}| p{20mm} |p{14mm} |}
    \hline
    &Static \newline checking&Source level \newline reporting  & Compile-time   & Domain-specific     &Let \newline insertion	         &   Equations \\ \hline
      C-Mix                     &Yes                         &-                    &Yes   			 & No	          &Yes				       &No                           \\ \hline
      Tempo                    &Yes                         &Yes                    &Yes   			   &No 		  &Yes				        &No                           \\ \hline
       C++ Template         &No                         &No                      &Yes    			    &No               &-					&No                           \\ \hline
      Template Haskell    &Yes                         &Yes                       &Yes    			     &No		  &-				        &No                           \\ \hline
      MetaOcaml             &Yes                         &Yes                      &No  			     &No 	          &No				        &No                           \\ \hline
      LMS                     &Yes                          &-                      &No   			   &No 	          &Yes					&No                           \\ \hline
      Verilog preprocessor     &Yes                       &Yes                      &Yes    			&Yes 	          &-					&No                          \\ \hline
      Hybridization          &Yes                         & -                      &Yes    			 & Yes	          &-         				&No                           \\ \hline
      This paper               &Yes                         & Yes                       &Yes    			 &Yes  			  &Yes 						&Yes                            \\ \hline
    
    \hline
  \end{tabular}
\end{center}

\label{fig:compare}
\end{table*}

Table \ref{fig:compare} provides an overview of how these different systems related to these key properties. The main observations from the table are as follows.  Almost all tools are compile-time (except MetaOCaml), and almost all are statically checked (except C++ Templates). A key feature of static checking is that it facilitates accurate source-level reporting. That is primary reason for choosing an approach based on BTA or some type of static analysis. Compile-time program specializers, such as C-Mix and Tempo focus on automatically specializing a program through a well understood set of transformations to produce a program that is faster than the original one. There are no fundamental reasons why specialization (and two-level languages) need to be limited to general purpose languages. In fact, as this paper shows, they can be quite useful as they can be used to increase expressivity. Let-insertion was invented in the partial evaluation community, and is adopted by automated tools by LMS (but not be explicit tools like MetaOCaml). It is quite critical when there are significant compile-time computations, as is the case when we are trying to eliminate non-trivial constructs like partial derivatives or performing substantial manipulations to turn equations into formulae.  However, none of these works address the question of supporting equations, that is, allowing the user to write constraints in equational form, and then translating them directly into ``formula" form for directed evaluation. 
\begin{figure*}[h]
\footnotesize
\begin{tabular}{  l >{$}l<{$} >{$}c<{$} >{$}l<{$} l  }
\multicolumn{3}{l}{\fbox{Syntax}}\\
\multicolumn{4}{l}{$ \hspace{1in}n \in \text{Names},\quad i \in \mathbb{N}, \quad q \in \mathbb{Q} \quad \text{and} \quad \bool\  \in \mathbb{B}$}\\
\hspace{\offset}$\text{Constant}$ &k &::= & i \ | \ q \ | \ \bool& \\
\hspace{\offset}$\text{Variable}$ &x &::= & n \ | \ x'& \\ 
\hspace{\offset}\text{Type}              &\type& ::=&  \tb{nat} \ | \ \tb{bool} \ | \ \tb{real}\  | \ \vtype{\type_j} \\
\hspace{\offset}\text{Type Environment} &\tenv& =&  \sets{x_j:\type_j} \ \text{such that whenever} \ x_i = x_j \ \text{then} \ i = j  \\
\hspace{\offset}$\text{Function}$ &f &::= & \tb{+} \ | \  \tb{-} \ | \ \tb{$\times$} \ | \  \tb{/} \ | \ \tb{\^{}} \ | \  \tb{\&\&} \ | \  \textsf{$||$} \ | \  \textsf{$>$} \ | \  \textsf{$>=$}
						\ | \  \tb{==}\ | \  \tb{!=}\ | \  \sin\ | \  \cos& \\
\hspace{\offset}$\text{Expression}$&e&::=& \const\  |\  x \ |\  \sequence{e_j} \ | \  e_1(e_2) \ | \ \f(e) \ | \ \der{e} \ | \ \pder{e_1}{e_2}  & \\
\hspace{\offset}$\text{Equation}$&s &::=& x = e \ | \ e_1 = e_2 ,\ e_1 \nequiv x \ | \ x^{+} = e  \ | \ 
                                           \tb{if} \ e\ \tb{then} \ s_1 \ \tb{else} \ s_2 \ | \ \forall n \in e. \ s \ | \ \sets{s_j}&\\
\multicolumn{3}{l}{\fbox{$\tenv \vdash e:\type$}}\\       
\multicolumn{5}{c}{      
$\begin{array}{ c }
\infer{\Gamma \vdash i: \tb{nat}} {} \quad \infer{\Gamma \vdash q: \tb{real}} {} \quad \infer{\Gamma \vdash t: \tb{bool}} {} \quad
\quad \infer{\Gamma \vdash n : \type} { n: \type \in \tenv} \quad
 \infer{\Gamma \vdash x' : \tb{real}} { x': \tb{real} \in \tenv & x :\tb{real} \in \tenv} \quad
\infer{\Gamma \vdash \sequence{e_j}:\vtype{\type_j}\ } {\Gamma \vdash e_j : \type_j} \quad
\\
\\
\infer{\Gamma \vdash e_1(i): \type_i } {\Gamma \vdash e_1 : \vtype{\type_j}\\ \Gamma \vdash i:\tb{nat}\ \ i<m} \quad
\infer{\Gamma \vdash e_1(e_2): \type } {\Gamma \vdash e_1 : \vtype{\type}\\ \Gamma \vdash e_2:\tb{nat}}\quad 
\infer{\Gamma \vdash \f(e):\type} 
        {\multicolumn{1}{l}{\Gamma \vdash e : \vtypef{\type_{f,j}}} \\ \multicolumn{1}{l}{\f : \vtypef{\type_{f,j}} -> \type}} \quad
\infer{\Gamma \vdash \der{e} : \tb{real}} { \Gamma \vdash e:\tb{real}}\quad
\infer{\Gamma \vdash \pder{e_1}{e_2} : \tb{real}} 
	  { \Gamma \vdash e_1:\tb{real} \ \  \Gamma \vdash e_2:\tb{real}}\\                 
\end{array}$}\\ 
\multicolumn{1}{l}{\fbox{$\tenv \vdash s$}}
\end{tabular}
\[
\begin{array}{c}
\infer{\Gamma \vdash e_1 = e_2 } { \Gamma \vdash e_1 : \type \ \Gamma \vdash e_2:\type } \quad   
\infer{\Gamma \vdash x^{+} = e } { \Gamma \vdash x: \type & \Gamma \vdash e:\type}\quad
\infer{\Gamma \vdash \tb{if} \ e\ \tb{then} \ s_1 \ \tb{else} \ s_2 }
	  {\Gamma \vdash e : \tb{bool} \ \Gamma \vdash s_1  \  \Gamma \vdash s_2  \  }\quad
\infer{\Gamma \vdash \forall n \in e \ . \ s}
	  { \Gamma \vdash e:\vtype{\type}  \quad   \tenv, n:\type \vdash s }\quad
\infer{\Gamma \vdash \sets{s_j} } 
        { \Gamma \vdash s_j }\\
\end{array}
\]
\caption{Syntax and Type system for \CoreDEL}
\label{fig:Lsyntaxtype}
\end{figure*}
Our work is comparable to that of HyST \cite{hyst}, which is a tool that aims
to facilitate interchange of models between different tools. This way, HyST facilitates sharing of models and comparing solvers. In contrast, our work explores another dimension for reuse, namely, how these tools can be extended to support a more flexible and expressive modeling formalism.

\section{A Differential Equation Language (\CoreDEL)}\label{section:syntax}
Fig.~\ref{fig:Lsyntaxtype} introduces the syntax and type system for a core differential equation language called \CoreDEL.  We use the
following notational conventions:
\begin{enumerate}
	\item[-] Writing $\sequence{e_j}$ denotes a vector $\la e_1,
         e_2,...,e_m\ra$.  We will occasionally omit the superscript
          $j \in \{1...m\}$ and write $\la e_j \ra$ when the range of
          $j$ is clear from context.
	 \item[-] Writing $\sets{e_j}$ denotes a set
           $\{e_1,e_2,...,e_m\}$, and we write $A \uplus B$ for $A
           \cup B$ when we require that $A \cap B = \emptyset$.
\end{enumerate}

The set $\text{Names}$ is a finite countable set of names, and we
use $n$ to denote elements of this set.  We use $i$ to range over
natural numbers, $q$ to range over rationals, and $t$ to range over
booleans.

Similarly, we introduce the natural number $i$ drawn from the set of
natural numbers $\mathbb{N} $, rational $q$ from rational number set
$\mathbb{Q} $ and lastly boolean values $t$ from $\set{\tb{true},
  \tb{false}}$, denoted by $\mathbb{B}$. Variables are either a name
$n$ or a name followed by a number of primes ($'$).
Type terms represent naturals, reals, Booleans, and products,
respectively.  A type environment is a partial function from variables
to type terms.  We treat environments as graphs of functions or as
functions.

Function names $f$ are drawn from a fixed set containing basic
operators.  Expressions include constants, variables, vector
expressions, vector indexing, function application, time derivatives,
and partial derivatives.  Derivatives can be applied to both expressions and variables. The time derivative on a variable, for example $x''$, has special status, in that it can both be used in expressions to mean the value of the derivative at a given time and can also be equated to a value. When there is a constraint that equates a time derivative of a variable to a value, the effect is that integration is used to compute the value of the variable itself.  In principle, in an equational language, if a symbolic expression for the variable is known, the derivative variable can be determined from that expression.  In practice, it is generally rare that a closed form expression for the result of a simulation is known. Instead, it is more common to have the value of the derivative known, and then numerical integration is used to compute the value of the variable itself.  
The partial derivative ($\pder{e_1}{e_2}$) is an operator that takes two expressions and returns the result of the first expression differentiated with respect to the second expression. The  ASC\RNum{2}-based syntax is $\verb|expr'[expr]|$.  For two scalars, the result is simply the first expression partially differentiated with the second one. If one expression is a scalar, and the other a vector, the operator is applied component-wise.  Allowing arbitrary expressions $e_2$ instead
of just variables in partial derivatives, allows us to express things
like the Euler-Lagrange equation directly. 

The first type of equations is a continuous equation.  In processing
such equations, we distinguish between two cases, one where the left
hand side is a variable, and the other when the left hand side is not
a variable.  This will be used to illustrate that the formalism is able to
accommodate languages where equations need not be directed.  The second type of equation is the discrete assignment.  ``$x$ is
reset to e''.  Discrete assignments are essential for modeling hybrid
systems, where instantaneous changes of a value (resets) can occur in
juxtaposition to continuous dynamics.    The third type of equations is a conditional equation, which allows us
to express the choice between which of two equations holds depending on
the boolean condition given as an expression.  The fourth kind of equations is a universal quantification, and it
provides a concise way of describing the dynamics of a system that has
a family of state variables. The variable introduced by this construct
may only be an unprimed name.  The last construct is a set of
equations $\sets{s_j}$.

\subsection{Type System}
A \CoreDEL expression $e$ has type $\type$ under environment $\tenv$
when the judgement $\tenv \vdash e:\type$ is derivable according to
the rules presented
 in Fig.~\ref{fig:Lsyntaxtype}.  The rules for
natural, real, and boolean constants are straightforward.  The rule
for unprimed names is simple environment lookup.  The rule for primed
variables, however, requires that both primed and unprimed variables
have type real. The rule for vector construction is also
straightforward. Vector indexing is a bit more interesting, as it
treats the case when the index is a literal as a special case,
allowing elements to have different types.  This makes it possible to use
vectors both for tuples and for (homogeneous) vectors.  Function
applications assume that we have a function $n_f$ that determines the
arity of the function $f$, and a function $\tau_{f,j}$ that determines
the type of the $j$th argument to the function.  Partial derivatives
have straightforward rules.  The rules for equations are straightforward. Finally, environment extension of environment $\Gamma_1$ with the binding
$x:\type$, written $\Gamma_1,x:\type$ is an environment $\Gamma_2$
defined as follows:
\[
\Gamma_2(y) = 
\begin{cases}
	\type &\mbox{if} \ y = x,\\
  	\Gamma_1(y) & otherwise	.		
\end{cases}
\]

\subsection{Example:  A Lagrangian Model}

\begin{figure}[h]
 \caption{Compiling Pendulum/Mass Example}
    \centering
    \subfloat[Latex-style Acumen-17 Source for Pendulum/Mass Example]{
     \begin{tabular}{| >{$}l<{$}|}
        \hline
        q = (x,\theta), a = 1, m = 2, M = 5, g = 9.8,  \\ \\ 
         k = 2, I = \frac{4}{3}ma^2, L=T - V, \\ \\ 
         T = \frac{1}{2}(M + m)\dot{x}^2 + ma\dot{x}\dot{\theta}\cos(\theta) + \frac{2}{3}ma^2\dot{\theta}^2 \\ \\ 
         V = \frac{1}{2}kx^2 + mga(1-\cos(\theta)), \ \  \\ \\
         \forall i\in\{1...|q|)\}, \frac{d}{dt} \left(\frac{\partial L}{\partial {\dot{q_i}}}\right)
 - \frac{\partial L}{\partial{q_i} }=0  \\\hline  
    \end{tabular}}
    \qquad
     \subfloat[After Binding-Time Analysis (BTA)]{
     \begin{tabular}{| >{$}l<{$}|}
        \hline
        \rowcolor[gray]{0.9}
         q = (x, \theta), a = \colorbox{white!20}{1}, ...  I = \colorbox{white!20}{$\frac{4}{3}ma^2$}\\ 
        \rowcolor[gray]{0.9}  \\ 
         \rowcolor[gray]{0.9}  
         \colorbox{white!20}{$\forall i\in\{1...|q|)\}$}, \frac{d}{dt} \left(\frac{\partial L}{\partial {\colorbox{white!20}{$\dot{q_i}$}}}\right)
 - \frac{\partial L}{\partial \colorbox{white!20}{\makebox(1,1){$q_i$}} }=0  \\\hline      
    \end{tabular}} 
   \qquad
    \subfloat[After Specialization (Implicit ODEs)]{
      \begin{tabular}{| >{$}l<{$}|}
         \hline
         q = (x,\theta), a = 1, ...  I = \frac{8}{3} \\ \\ 
         2\cos(\theta)\ddot{\theta} - 2\sin(\theta)\dot{\theta}^2 + 7\ddot{x} + 2x = 0 \\ \\
         \frac{98}{5}\sin(\theta) + 2\cos(\theta)\ddot{x} + \frac{8}{3}\ddot{\theta} = 0 \\ \\ 
          \hline   
     \end{tabular} }
     \qquad 
      \subfloat[After Symbolic Gaussian Elimination (Explicit ODEs)]{
      \begin{tabular}{| >{$}l<{$}|}
         \hline
         A = \sin(\theta), B = \cos(\theta), 
         \\ \\
         \ddot{x} = \frac{2(A\dot{\theta}^2 - x) - B\ddot{\theta}}{7}
          \\\\
         \ddot{\theta} = \frac{\frac{-686}{5}A - 4B(A\dot{\theta}^2-x)}{\frac{56}{3} - 4B^2}
         \\ 
          \hline   
     \end{tabular} }
\label{fig:compile}
\end{figure}

For a variety of technical reasons, researchers working on novel robotic systems tend to make extensive use of the Lagrangian method. It is especially useful in the case when the system being described has more than one state variable.  Then modeling using Lagrange employs families of equations, which are written as one equation but really represent a collection of different equations derived by instantiating certain indices.  Figure~\ref{fig:compile}(a) provides the Acumen-17 model of a second order nonlinear system shown in Figure~\ref{fig:springmass}.  It consists of a pendulum hanging from a mass, which is attached through a  spring to a wall.  

\begin{wrapfigure}{h}{0.3\textwidth}
\includegraphics[width=0.3\textwidth]{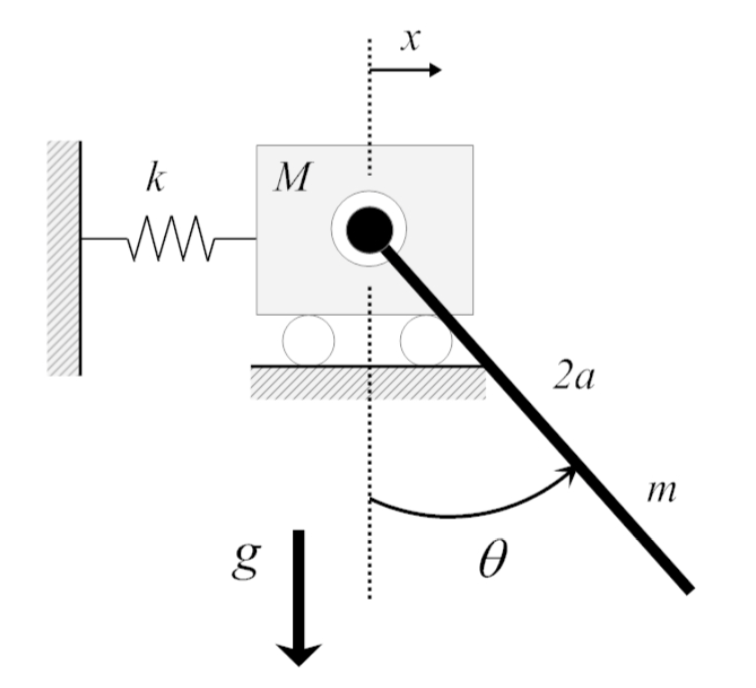}
\caption{\label{fig:springmass}A Pendulum Spring-Mass system.}
\end{wrapfigure}
As the system has two degrees of freedom, $x$ and $\theta$, the example introduces a vector of state variables $q$.  The Euler-Lagrange equation that appears at the end of the example is expressed by the family of equations.  In Figure~\ref{fig:compile}(a),  the $\forall$ quantifier is used to introduce the index variable for a family of equations.  In the  ASC\RNum{2}-based syntax, the keyword
\verb|foreach| represent this quantifier.  The intent is to express as concisely and as close to what would typically appear in a mechanics textbook:
\[
\frac{d}{dt} \left(\frac{\partial L}{\partial {\dot{q_i}}}\right) - \frac{\partial L}{\partial{q_i} }=0 \ \ \ \text{where} \ \ \
\begin{aligned} \begin{split}
 q= (x,\theta), \\
 i \in \{1,2\}. \end{split} \end{aligned}
\]

This notation generally has a syntactic interpretation, that is, the $name$ contained in the $i$th element of the vector is looked up.
In other words, this family of equations literally represents the following two equations:
\[
\frac{d}{dt} \left(\frac{\partial L}{\partial {\dot{x}}}\right) - \frac{\partial L}{\partial{x} }=0 \ \ \  \text{and} \ \ \ 
\frac{d}{dt} \left(\frac{\partial L}{\partial {\dot{\theta}}}\right) - \frac{\partial L}{\partial{\theta} }=0
\]

The offline partial evaluation strategy enables us to  support family of equations and partial differentiation by utilizing the two 
most important components, namely the binding-time analysis (BTA) and specialization.  A successful BTA annotates the model
with instructions for performing certain part of the computation early and other part for later processing.  The annotated model 
for the pendulum/spring example is illustrated in Fig.~\ref{fig:compile}(b).  In this illustration, computations that remain
for further processing are shaded grey, whereas computations can be performed immediately in the next specialization phase appear in
a white background.  The value of $a$ is marked known or static as it is a value, and the BTA also annotates variable $I$ 
known for that both $m$ and $a$ are known variables.  A more interesting case is the indexing operator $q(i)$ .
Although the state variable vector $q$ being marked unknown, in fact we need to solve for the values of state variables $x$ and $\theta$
in the simulation, we can still perform this operation statically for the reason that the size of $q$ and the index variable $i$ are known.

The step which performs the work that a BTA schedules is called specialization.  The result of specializing our running example is presented in Fig.~\ref{fig:compile}(c).  Computing the value of $I$ is simple rational arithmetic.  The instantiation of a  family of equations is essentially a type of iteration, which also replaces $q_i$ by $x$ and $\theta$ by vector lookup.  In the same time, symbolic time and partial
differentiation are performed using the chain rule.  Solving multiple implicit ODEs to explicit form equations is achieved using an analog of symbolic Gaussian elimination. For our running example, the result of this step is presented in Fig.~\ref{fig:compile}(d).  Abstract interpretation with interval analysis is used to ensure the pivot expression is non zero.
To control the  system, for example, stabilizing the position and the angular displacement, one can add two PD controllers. The modification to the original model in Fig.~\ref{fig:compile}(a) are as follows: 
\begin{table}[h!]
\centering
 \begin{tabular}[h!]{| >{$}l<{$}|}
 \hline
       
ux = 100*(2 - x) + 30*(0 - \dot{x}), ut = 100*(\pi - \theta) + 40*(0 - \dot{\theta}), \\ \\  
u = (ut, ux), \quad
 \forall i\in\{1...|q|)\}, \frac{d}{dt} \left(\frac{\partial L}{\partial {\dot{q_i}}}\right)
 - \frac{\partial L}{\partial{q_i} }=u(i) \\ 
          \hline   
     \end{tabular} 
 \end{table}

The Acumen implementation supports an enclosure simulation semantics that produces rigorous over-approximations (guaranteed upper and lower bounds) for all simulations \cite{adam}. Previously, this implementation only supported a formalism that worked with hybrid ODEs. With the work we presented here, this implementation can now handle models such as the pendulum spring mass model presented above.  The plot of controlled system are as follows:
 
\begin{figure}[h]
    \centering
    \subfloat[$\theta$]{
     \includegraphics[width=0.24\textwidth]{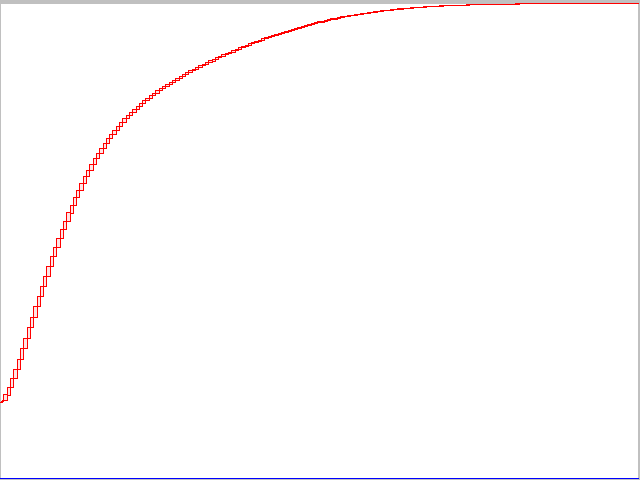}
   }
   \subfloat[$\ddot{\theta}$]{
     \includegraphics[width=0.24\textwidth]{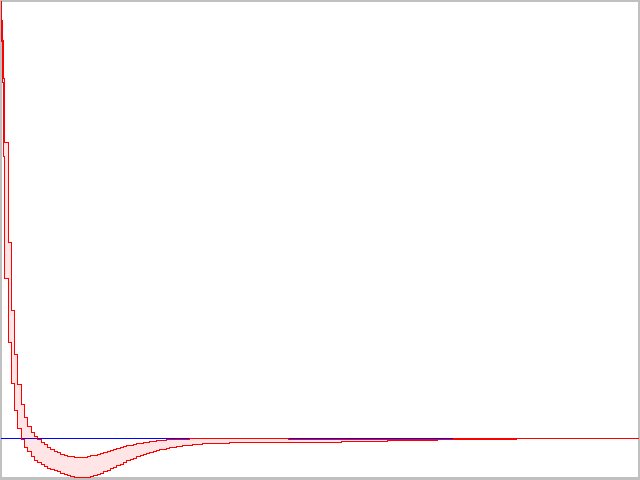}
   }
\subfloat[$x$]{
     \includegraphics[width=0.24\textwidth]{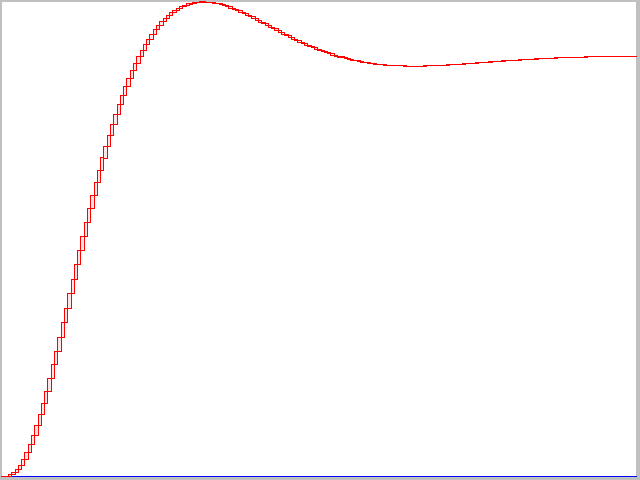}
   }
\subfloat[$\ddot{x}$]{
     \includegraphics[width=0.24\textwidth]{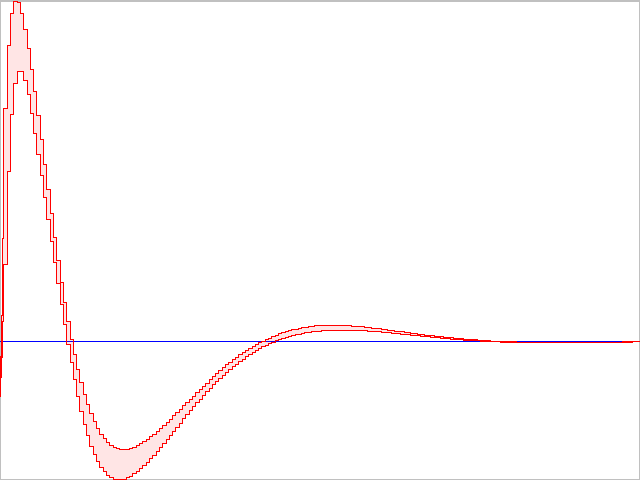}
   }
\end{figure}
\begin{figure}
    \caption{Two Case Studies}
    \centering
    \subfloat[Cross section of A Cam and Follower]{  
  \includegraphics[width=0.3\columnwidth]{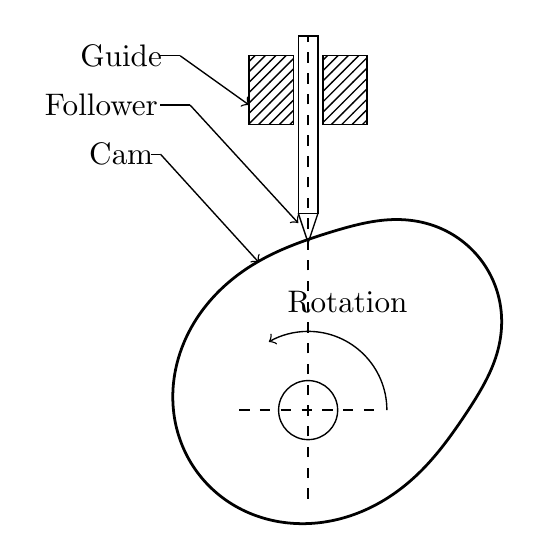}\label{figure:cam}
    }
   \qquad
     \subfloat[A Compass Gait Biped \cite{goswamicompass}]{  
   \includegraphics[width=0.4\columnwidth]{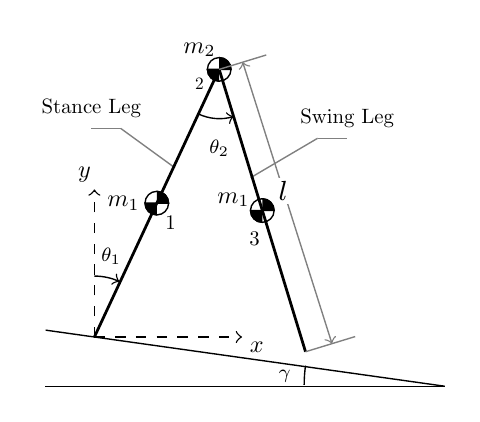}
    
    }
    \label{casestudy}
\end{figure}

\captionsetup[subfigure]{subrefformat=simple,labelformat=simple,listofformat=subsimple}
\renewcommand\thesubfigure{(\alph{subfigure})}
\subsection{A Cam and Follower Example}
We further demonstrate the expressiveness of the proposed language using the following two case studies. Transforming rotational motion into any other motions is often conveniently accomplished by means of a \emph{cam mechanism}.  A \emph{cam} is defined as a machine element having a curved outline, which by its rotation motion,
gives a predetermined motion to another element, which is often called \emph{follower}.  Fig~.\ref{casestudy}(a) shows such a \emph{cam mechanism}, the curved outline of cam $r$ is a function of rotational angle $\theta$, defined as below:
\[r = (1.5-\frac{cos\theta}{2}) * (1 + \frac{cos(2\theta)}{5} ) \]

In the study of various aspects of the follower motion, the velocity and acceleration of the follower are needed.  To get 
the correct form, the modeler usually has to manually derive the partial derivatives.  Fig.~\ref{fig:armcode} in the Appendix  shows the mathematical model and the corresponding \CoreDEL program.   Clearly, supporting partial derivatives in the language greatly simplifies the modeling task, and can save the modeler much tedious and error-prone work.

\subsection{A Compass Gait Biped Example} 
The Compass gait biped model \cite{cornellBiped,Aaron2D} is a two dimensional unactuated rigid body system placed on a downward surface inclined at a fixed angle $\gamma$ from the horizontal plane. A diagram of the model is shown in Fig~.\ref{casestudy}(c) with its physical parameters.  The configuration of this two-link
mechanism can be described by the generalized coordinates $q = [\theta_1, \theta_2]$, where $\theta_1$ is the angle from the vertical line to the \emph{stance leg} and $\theta_2$ is the angle between two legs.
 It is a hybrid model featuring two phases. At the start of each step, the system is governed by its
 continuous dynamics until the \emph{swing leg} hits the ground.  The discrete event can be modeled as an inelastic collision conserving angular momentum.  The stance and swing legs switch instantaneously  during the collision and go into the next step after.
\subsubsection{Continuous Dynamics and Discrete Event}
The continuous phase of this system can be modeled using the same Lagrange method shown earlier.
Let point $(x_i, y_i)$ denote the position of centralized masses shown in Fig~.\ref{casestudy}(b), form which its easy to define the kinetic and potential energy of the system.
Applying the same Lagrange equation shown in Equation 1 with $q = (\theta_1, \theta_2)$, we have the dynamic equations of the system during the swing phase.  The perpendicular distance from the walking surface to the tip of the \emph{swing leg} is given by
\[guard = lsin\gamma(sin\theta_1 + sin(\theta_2 - \theta_1) )  \]

Where $\gamma$ is the slope of the ground.  Impact occurs when the tip of the swing leg hits
the walking surface in a downward direction, which can be describe as follows:
$guard \leq 0 \wedge \dot{guard} < 0$.  Using conservation of angular momentum \cite{goswamicompass}, the explicit solution of post impact velocities can be determined.  Fig.~\ref{fig:armcode} in the Appendix  shows the mathematical model and the full \CoreDEL model.  This example shows the proposed formalism can support a direct mapping from mathematical model to simulation code for a hybrid system model with complex dynamics.

\begin{figure*}[h]
\footnotesize
\centering
\begin{tabular}{ l >{$}c<{$} >{$}l<{$}  >{$}l<{$}  }
&&&\\
Binding Time & b    & ::=& \stat \ | \ \dyn \\
$\text{Expression}$&e^b&::=& \const^{b}\  |\  x^{b} \ | \ (\sequence{e_j^{^{b_j}}})^b \ | \
                                           e_1^b(e_2^b)^b \ | \ \f(e^b)^b \ | \ 
                                           (\der{e^{b}})^{b} \ | \ (\pder{e_1^{b_1}}{e_2^{b_2}})^{b}\\
$\text{Equation}$&s^b &::=& (x^b = e^b)^b\ | \ ({x^{+}}^b = e^b)^b\ |\  (e_1^b = e_2^b)^b ,\ e_1 \nequiv x \\
                            & &&              | \ (\tb{if} \ e^b\ \tb{then} \ s_1^b \ \tb{else} \ s_2^b)^b \ | \ 
                                                 (\forall n^b \in e^b. \ s^b)^b  \ | \ (\sets{s_j^{b_j}})^b  \\
                                                 
Binding Time type &\type^\bt & ::= &  \tb{nat}^b \ | \ \tb{bool}^b \ | \ \tb{real}^b\ \ | \ \vbtype{b}{\type_j^{b_j}}  \\

Binding Time Environment &\btenv& = &  \sets{x_j : {\type_j}^{b_j}} \ \text{and}\ x_i = x_j \implies i = j \\
\end{tabular}
\begin{displaymath}
\begin{array}{l}
\begin{array}{c}
 \infer{\btenv \vdash i^b: \tb{nat}^b} {} \quad \infer{\btenv \vdash q^b: \tb{real}^b} {} \quad \infer{\btenv \vdash t^b: \tb{bool}^b} {} \quad
 \infer{\btenv \vdash n^b : \btt{b}} {n^b:\btt{b} \in \btenv } \quad 
 \infer{\btenv \vdash x'^{b} : \tb{real}^{b}} {x'^b:\tb{real}^{b} \in \btenv & x^\dyn:\tb{real}^{\dyn} \in \btenv } \quad

\\
\infer{\btenv \vdash (\sequence{e_j^{b_j}})^b : \vbtype{b}{\type_j^{b_j}}} {\btenv \vdash e_j^{b_j} : \type_j^{b_j} \quad b = \sqcup b_j} \quad
\infer{\btenv \vdash e_1^{b}(i^\stat)^{b_i}:\type_{i}^{b_i}} 
	  {\btenv \vdash e_1^{b} :\vbtype{b}{\type_j^{b_j}}\ \\ \btenv \vdash i^{\stat}:\tb{nat}^{\stat} \ \  i < m}\quad
\infer{\btenv \vdash e_1^{b_1}(e_2^{b_2})^b:\btt{b}} 
   	  {\btenv \vdash e_1^{b_1} :\vbtype{b_1}{\type^{b_1}}\ \\ \multicolumn{1}{c}{\btenv \vdash e_2^{b_2}:\tb{nat}^{b_2}} \\ \multicolumn{1}{c}{b = b_1 \sqcup b_2}} \\
\infer{\btenv \vdash \f(e^b)^b : \type^b} 
	  {\f:\vtypef{\type_{f,j}} ->\type \\ \btenv \vdash e^b:\vbtypef{b}{\type_{f,j}^{b_j}}} \quad 
\infer{\btenv \vdash (\der(e^b))^b : \tb{real}^{b}} {\btenv \vdash e^b : \tb{real}^{b}} \quad
\infer{\btenv \vdash (\pder{e_1^{b_1}}{e_2^{b_2}})^b : \tb{real}^{b}} {\btenv \vdash e_1 : \tb{real}^{b_1} \quad \btenv \vdash e_2 : \tb{real}^{b_2} \\ 
\multicolumn{1}{c}{b = b_1 \sqcup b_2}}\\
\end{array}\\ \\
\begin{array}{c}
\infer{\btenv \vdash (x^b = e^b)^b} { x^b: \btt{b} \in \btenv & \btenv \vdash e^b:\btt{b}}\quad
\infer{\btenv \vdash ({x^{b_1}} {}^{+}= e^{b_2})^{b_2}} { \btenv \vdash : x  : \btt{b_1} & \btenv \vdash e:\btt{b_2}}\quad
\infer{\btenv \vdash (e_1^{b_1} = e_2^{b_2}) ^{b}} { \btenv \vdash e_1^{b_1} : \btt{b_1} & \btenv \vdash e_2^{b_2}:\btt{b_2} & b = b_1 \sqcup b_2} \\
\infer{\btenv \vdash (\tb{if} \ e^{b}\ \tb{then} \ s_1^{b_1} \ \tb{else} \ s_2^{b_2})^{b_3}}
	  {\btenv \vdash e : \tb{bool}^b \quad \btenv \vdash:s_1^{b_1} \quad \btenv \vdash s_2 ^{b_2} \quad b_3 = b \sqcup b_1 \sqcup b_2} \quad
\infer{\btenv \vdash (\forall n \in e^{b_1} \ . \ s^{b_2})^{b_2}}
	  { \btenv \vdash e:\vbtype{b_1}{\type^{b_j}} \\  \btenv,n:\type^{b_1}  \vdash s^{b_2} }\quad
\infer{\btenv \vdash (\sets{s_j^{b_j}})^{b}} { \btenv \vdash s_j ^{b_j} \quad b = \sqcup b_j}  \qquad
\end{array}
\end{array}
\end{displaymath}
\caption{Binding Time Analysis for \CoreDEL}
\label{fig:LBsyntaxtype}
\end{figure*}
\section{BTA and Specialization for \CoreDEL}\label{section:BTASpecialization}
This section presents a declarative specification of the binding-time
analysis (BTA) and specialization process for \CoreDEL, and proves the correctness (type-safety) of the BTA with respect to the specialization process.

\subsection{BTA}
BTA is the analysis performed in an offline partial evaluation system to determine, given some early or
``static'' inputs to a program, which of the program's computation can be done at an early stage \cite{Jones85}. Fig.~\ref{fig:LBsyntaxtype} gives a declarative specification of the BTA.  There are two binding times $\stat$ and $\dyn$, representing
``static'' and ``dynamic'' computations, respectively.  Static is for
compile-time computations that are done before the simulation starts,
and dynamic is for computations done during the simulation proper.
Expressions, equations, types, and type environments are all annotated
with binding times. 

The changes to the derivation rules are largely straightforward.
Essentially, binding times are propagated with types.  In addition,
when multiple subexpressions occur, their binding times are combined
using the least upper bound operator $\sqcup$ which returns static only
if all arguments are static, otherwise returns dynamic.  However, for vector indexing, when the index expression is static and the subexpressions dynamic, we will still perform the look up operation.  Finally, the rule 
for primed variable requires the unprimed variable with the same name to be dynamic. And in
the rule for vector indexing with a literal, where the literal is
annotated as static, the binding time of the expression is the same as 
the corresponding entry.
\subsection{Specialization}\label{section:specialization}
 Fig.~\ref{fig:semantics} in Appendix presents the big-step semantics for the specialization process. 
 Values include constant with static annotation, dynamic expression and vector of values.  
 Normal form equations are straightforward, with the absence of universal quantification equation.   
The first auxiliary function is used to compute static function application.  The last two are for eliminating total and partial derivatives using the chain rule.   Function $FV$ returns free variables in an 
expression and function $LV$ extracts the left hand side variable of a directed equation.

Relation $\ereduce$  essentially specializes all subexpressions to values then combines them according to their binding times.  Function application with static binding time returns the evaluation result of the corresponding function. Vector indexing with static index performs look up operation statically, even if the vector itself has dynamic binding time.
Total and partial derivatives get eliminated statically using different rules depending on the what their subexpressions specialized to.  The rules for relation $\sreduce$ are similar.  However, they all require that the equation to be specialized does not contain free variables that are defined in the equations following it.  For directed equations, in addition to specializing the right hand side expression, the rule also substitutes the result into the rest of the equations.
Both relations can also generate \err \ terms, which will be propagated to top level for error reporting. For example, the static index may be specialized to a natural number that is bigger than the size of the vector.  However, 
one type of error we do not catch is the case of partial derivative $\pder{e_1^{b_1}}{e_2^{b_2}}$, when $e_2^{b_2}$ can not be specialized to a variable $x^{b_2}$. It is analogous to the traditional division by zero error. 

\subsection{Type Safety}
\theoremstyle{definition}
\newtheorem{defn}{Definition}
\newtheorem{theorem}{Theorem}
\newtheorem{lemma}{Lemma}
 \begin{defn}\label{def:sim} The erasure relation $|\cdot|$ for $e^b$, $s^b$ and $\btenv$ is defined as follows:
\[ | e^b| = e \quad |s^b | = s \quad  |\btenv|(x) = \type \ \text{if}\  \tenv(x) = \type \]
 \end{defn}
\begin{lemma}[Erasure preserves typablity] $\forall \btenv, e^b, s^b$
\[   \btenv \vdash e^b:\type^b  => |\btenv| \vdash |e|:\type\]
\[  \btenv \vdash s^b => |\btenv| \vdash |s|\]
 \end{lemma}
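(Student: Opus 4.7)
The plan is to prove both statements simultaneously by a straightforward mutual structural induction on the derivation of the binding-time judgments $\btenv \vdash e^b : \type^b$ and $\btenv \vdash s^b$. For each BTA rule in Fig.~\ref{fig:LBsyntaxtype}, the strategy is to identify the corresponding rule in the plain type system from Fig.~\ref{fig:Lsyntaxtype} and verify that applying erasure to premises and conclusion yields a valid instance of that rule. Since the BTA rules are obtained from the original type rules essentially by decorating each type with a binding-time annotation and adding side conditions on how the binding times combine (via $\sqcup$), erasure simply discards these decorations and leaves a valid plain derivation.

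First I would record two auxiliary facts about erasure that are immediate from Definition~\ref{def:sim}: (i) for every variable $x$, $\btenv(x) = \type^b$ implies $|\btenv|(x) = \type$; and (ii) erasure commutes with environment extension, i.e.\ $|\btenv, n : \type^b| = |\btenv|, n : \type$. Fact (ii) is what lets the $\forall n \in e$ case of the induction go through without subtlety.

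Next I would dispatch the expression cases. The constant rules ($i^b$, $q^b$, $t^b$) are immediate since no premises are involved. The variable rule uses fact (i). The primed variable rule additionally needs $x^{\dyn} : \tb{real}^{\dyn} \in \btenv$, whose erasure gives $x : \tb{real} \in |\btenv|$, matching the side condition in the plain rule. For vector construction, function application, time derivative, and partial derivative, the inductive hypothesis applied to each subexpression premise yields the required plain judgments, and the plain rule then concludes the erased judgment. For the two vector indexing rules, the literal case and the general case both erase to the plain general indexing rule (or its literal variant) after applying the IH; the restriction $i < m$ in the literal case carries over unchanged. The equation cases are fully analogous: for directed, undirected, discrete reset, conditional, universal quantifier, and set-of-equations constructs, one applies the IH to each premise, uses fact (ii) in the universal case to rewrite the extended environment, and concludes by the corresponding plain equation rule.

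I do not expect a genuine obstacle: the BTA system was designed to mirror the plain type system, so the proof is essentially bookkeeping. The only point that requires slight care is ensuring that the erasure of combined binding times in rules like function application, vector indexing, conditionals, and partial derivatives ($b = b_1 \sqcup b_2$, etc.) does not interfere with anything; since the plain rules impose no constraint relating binding times (there are none), these side conditions simply drop away on erasure. Thus the induction closes in every case and both statements of the lemma follow.
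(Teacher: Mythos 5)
Your proof proposal is correct and follows exactly the same route as the paper, which proves the lemma by induction on the derivations of $\btenv \vdash e^b:\type^b$ and $\btenv \vdash s^b$; your write-up simply spells out the case analysis and the two auxiliary facts about erasure that the paper leaves implicit.
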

\begin{proof}
By induction on the derivation of $\btenv \vdash e^b:\type^b$ and $\btenv \vdash s^b$, respectively.
\end{proof}

\begin{lemma}{Substitution type preservation}  $\forall \btenv,$ \\ $ x, e, s,\type$.
\[ 
\begin{array}{l}
 \btenv \vdash v^{b_1}:\type^{b_1}  \ \wedge \ \btenv, x:\type^{b_1} \vdash e:\type^{b_2}
	 \\  \qquad \implies \btenv \vdash e[x :=v^{b_1}]^{b_2}:\type^{b_2} \\
  \btenv \vdash v^{b_1}:\type^{b_1}  \ \wedge \ \btenv, x:\type^{b_1} \vdash s^{b_2}
	\\ \qquad \implies \btenv \vdash s[x := v^{b_1}]^{b_2} 
\end{array}
\]
\end{lemma}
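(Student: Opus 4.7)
The plan is to prove both clauses by simultaneous structural induction on the typing derivations $\btenv, x:\type^{b_1} \vdash e:\type^{b_2}$ and $\btenv, x:\type^{b_1} \vdash s^{b_2}$. I would interleave the two inductions because the $\forall$-rule ties an equation body to an indexing expression. Throughout, I would adopt the standard alpha-conversion convention that all bound names differ from $x$ and do not occur free in $v^{b_1}$, and I would rely on a routine weakening lemma stating that $\btenv \vdash v^{b_1}:\type^{b_1}$ implies the same judgement under any fresh extension of $\btenv$; this is immediate by a separate induction on derivations.

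For the base cases, constants are invariant under substitution and their derivations transfer verbatim. For a variable $n^{b_2}$, I would split on whether $n = x$: if so, inversion on the lookup rule gives $b_2 = b_1$ and $\type^{b_2} = \type^{b_1}$, and the goal collapses to the hypothesis $\btenv \vdash v^{b_1}:\type^{b_1}$; otherwise the substitution is the identity and the same lookup still fires in the smaller environment $\btenv$. For every compound constructor (vectors, indexing, function application, both kinds of derivatives, directed and undirected equations, resets, conditionals, and sets of equations), substitution commutes with the constructor, so I would apply the appropriate induction hypothesis to each immediate subterm, recombine binding times with the same $\sqcup$ used in the original derivation, and invoke the matching typing rule to rebuild the judgement in $\btenv$.

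The only inductive case that needs real care is the $\forall$-binder. After pushing the substitution under the quantifier by freshness, I would apply the expression IH to the iterated expression and then the equation IH to the body under the further extension of $\btenv$ by the bound name; here the weakening lemma is first used to transport $\btenv \vdash v^{b_1}:\type^{b_1}$ into the extended environment so that the IH is applicable. The conditional and set-of-equations cases reduce uniformly to IH on the subparts followed by $\sqcup$-recombination.

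The main obstacle I anticipate is the primed-variable rule, which simultaneously requires $x'^b:\tb{real}^b$ and $x^{\dyn}:\tb{real}^{\dyn}$ in scope. Because the syntactic substitution $e[x := v^{b_1}]$ does not touch an occurrence of $x'$, I would need the premise on $x^{\dyn}$ to remain valid after the binding $x:\type^{b_1}$ is stripped. The freshness convention on $x$ rescues the argument: any well-typed primed occurrence $x'^{b}$ in the original $e$ must have drawn its $x^{\dyn}:\tb{real}^{\dyn}$ entry from $\btenv$ itself rather than from the new binding, so the derivation in the conclusion rebuilds without trouble. With this observation the primed-variable subcase reduces to the unprimed one and the induction goes through.
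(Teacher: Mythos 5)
Your overall strategy --- simultaneous induction on the two typing derivations --- is exactly the paper's own (its proof is the one-line ``by induction on the derivation''), and nearly all of your case analysis is the standard, correct elaboration of that. The one step that does not go through as written is precisely the primed-variable case you single out. Alpha-conversion lets you assume that \emph{bound} names (the $n$ of a $\forall$) are distinct from $x$ and from the free variables of $v^{b_1}$; it gives you nothing about the \emph{free} variable $x$ that is being substituted for, so you cannot conclude that the premise $x^{\dyn}:\tb{real}^{\dyn}$ of the rule for $x'$ ``must have been drawn from $\btenv$ itself.'' Concretely, take $\type=\tb{real}$, $b_1=\dyn$, $x$ not in the domain of $\btenv$, and $x'^{\,b}:\tb{real}^{b}\in\btenv$. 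Then $\btenv, x:\tb{real}^{\dyn} \vdash x'^{\,b}:\tb{real}^{b}$ is derivable with the second premise supplied by the very binding that is being stripped; the substitution leaves the occurrence $x'$ untouched, and $\btenv \vdash x'^{\,b}:\tb{real}^{b}$ is then \emph{not} derivable. So the primed case does not reduce to the unprimed one by your freshness argument.

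To repair it you need an extra hypothesis that happens to hold at every point where the specialization semantics invokes the lemma: either $x$ already occurs in $\btenv$ with the same binding-time type (as for the left-hand-side variable of a directed equation, whose entry persists in the global environment), or $x$ is a quantifier-bound index name whose primed version cannot occur in the body (as in the $\forall$ rule), so the problematic typing rule is never instantiated against the stripped binding. Stating one of these side conditions --- or defining substitution so that it also rewrites $x'$ consistently with $v^{b_1}$ --- closes the gap. The remainder of your proof, including the weakening step used under the $\forall$-binder and the $\sqcup$-recombination for compound constructors, is fine and matches what the paper leaves implicit.
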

\begin{proof}
By induction on the derivation of $\btenv \vdash e^b:\type^b$ and $\btenv \vdash s^b$ respectively.
\end{proof}

\begin{lemma}[Type Preservation] $ \forall \tenv,\btenv,e,s,\type.$\label{lemma:typepreserve}
\[\btenv \vdash e^b:\btt{b} \wedge e^b \ereduce \vb \implies
    \btenv \vdash \vb:\type\]
\[\btenv \vdash s^{b} \wedge s^{b} \sreduce \wb \implies
    \btenv \vdash \wb\]
\end{lemma}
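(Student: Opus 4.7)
The plan is to prove the two statements simultaneously by induction on the derivations of $e^b \ereduce v^b$ and $s^b \sreduce w^b$. In each case I would first invert the assumed typing judgement $\btenv \vdash e^b : \tau^b$ (resp.\ $\btenv \vdash s^b$) to obtain typings for the immediate subterms, then apply the inductive hypothesis to each premise of the reduction rule to obtain typings for the already-specialized subvalues, and finally re-assemble using the BTA typing rules of Fig.~\ref{fig:LBsyntaxtype}. The binding times line up automatically because $\ereduce$ produces a static value exactly when all of the inputs were static and the rule for combining sub-binding-times uses $\sqcup$ in both the source and the result.

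The easy cases are constants, variables, and vector construction, where the result type is read off from the same rule used to type the source. For function application at static binding time, I would use the assumption that the interpretation of $f$ respects its declared type $\vtypef{\type_{f,j}} \to \type$, so applying it to static values of the right types yields a value of type $\type^{\stat}$; the dynamic case leaves a residual of the correct type by reconstruction. For vector indexing with a static index $i$, inversion gives $\btenv \vdash e_1^b : \vbtype{b}{\type_j^{b_j}}$ and $i<m$, and the BTA indexing rule directly assigns the extracted component type $\type_i^{b_i}$, which matches whatever value the $i$th component specialized to by IH.

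For the directed equation case $(x^b = e^b)^b$, after specializing the right-hand side to some $v^{b}$ of type $\tau^b$ by IH, the substitution of $v^{b}$ for $x$ into the remaining equations preserves typing by Lemma~2, so the residual normal-form equation is well typed. The conditional and universal-quantification cases are analogous: the IH gives typings for the specialized branches/bodies and the congruence rules of Fig.~\ref{fig:LBsyntaxtype} reassemble them. The non-free-variable side condition on $\sreduce$ ensures that substitution is only applied in contexts where it is sound to do so, and this is exactly what Lemma~2 requires.

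The main obstacle will be the rules that eliminate total and partial derivatives using the chain rule. Here the specialized result is not a mere syntactic rearrangement of subexpressions but a new expression built by pattern-matching on the shape of what $e$, $e_1$, or $e_2$ specialized to (constants, variables, applications of $f$, vectors, etc.). For each such pattern I would have to check that the rewritten expression is typable at $\tb{real}^{b}$ under $\btenv$, using the fact that the built-in operators used in the chain-rule expansion (products, sums, and the derivatives of the primitive functions of $f$) are themselves declared with real-valued types. This amounts to a separate case analysis on the shape of the specialized subterm, but no new induction is needed: the IH already supplies the typings of the constituent pieces, and the chain-rule expansion uses only constructs whose BTA typing rules have already been established. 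Once these derivative cases are dispatched, the remaining cases of the error propagation for $\err$ are vacuous since no typing judgement is claimed about $\err$.
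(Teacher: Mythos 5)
The paper itself does not include a proof of this lemma in the main text (only the surrounding lemmas carry one-line ``by induction on the derivation'' proofs, and the type-safety theorem is deferred to supplementary material), so there is nothing to compare against line by line. Your overall strategy --- simultaneous induction on the derivations of $e^b \ereduce \vb$ and $s^b \sreduce \wb$, inversion of the typing judgement, the substitution lemma (Lemma~2) for the directed-equation case, and a dedicated case analysis for the derivative-elimination rules --- is the standard and almost certainly the intended argument, and your identification of the chain-rule cases as the only non-routine part is accurate.

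There is, however, one concrete gap hiding inside your ``check that the rewritten expression is typable at $\tb{real}^{b}$'' step for the derivative cases. The residuals produced there introduce \emph{primed} identifiers that do not occur in the redex: the rule $\der{e^{b}} \ereduce x'^{b}$ (from $e^{b} \ereduce x^{b}$) and the rules producing $\tf(\vectoromit{v_j^{b_j}},\vectoromit{v_j'^{b_j}})$ and $\pf(\ldots)$ all mention $x'$ or $v_j'$. The induction hypothesis only gives you $\btenv \vdash x^{b} : \tb{real}^{b}$, i.e.\ that $x$ is bound in $\btenv$; the BTA typing rule for a primed variable additionally requires $x'^{b} : \tb{real}^{b} \in \btenv$ and $x^{\dyn} : \tb{real}^{\dyn} \in \btenv$, and neither membership follows from anything you have derived. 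To close these cases you need an explicit well-formedness hypothesis on $\btenv$ --- e.g.\ that it is closed under priming of dynamic real-typed variables --- or you must thread such an invariant through the induction. Without it, the case $\der{e^{b}} \ereduce x'^{b}$ simply cannot be discharged. (A smaller wrinkle of the same kind: the rules yielding $0^{\dyn}$ and $1^{\dyn}$ for partial derivatives produce integer literals where $\tb{real}^{\dyn}$ is expected, which needs either a coercion or a remark that literals are read as rationals.) The rest of your outline --- constants, variables, vectors, static indexing, function application via the declared type of $f$, the use of Lemma~2 under the free-variable side condition, and the vacuity of the $\err$ cases --- is sound as stated.
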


\begin{lemma}{Static value}  $\forall \btenv, e.$ \label{lemma:staticvalue}
    \[\btenv \vdash v^\stat:\tb{nat}^\stat \implies |v^{\stat}| = i  \]
    \[\btenv \vdash v^\stat:\tb{bool}^\stat \implies |v^{\stat}| = t  \]
    \[\btenv \vdash v^\stat:\tb{real}^\stat \implies |v^{\stat}| = q  \]
\end{lemma}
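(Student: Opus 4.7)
The plan is to proceed by structural case analysis on the form of $v^\stat$. From the description in Section~\ref{section:specialization}, a value has one of three shapes: a static-annotated constant $k^\stat$, a dynamic expression $e^\dyn$, or a vector of values $(\sequence{v_j^{b_j}})^b$. All three implications of the lemma have identical structure, so I would handle them uniformly, distinguishing only at the last step by the type of literal produced.

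First I would dismiss the dynamic-expression case: the outer annotation on $e^\dyn$ is $\dyn$, which cannot simultaneously equal $\stat$ under the hypothesis $\btenv \vdash v^\stat : \type^\stat$; uniqueness of the outer binding-time annotation suffices here. Next I would dismiss the vector case: the only applicable typing rule yields a product type $\vbtype{\stat}{\type_j^{b_j}}$, and none of $\tb{nat}^\stat$, $\tb{bool}^\stat$, $\tb{real}^\stat$ is a product type, so this case is vacuous. Both eliminations are one-line inversions that do not require appealing to Lemma~\ref{lemma:typepreserve}.

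The substantive case is $v^\stat = k^\stat$. By inversion on the three constant rules of Fig.~\ref{fig:LBsyntaxtype}, the derivability of $\btenv \vdash k^\stat : \tb{nat}^\stat$ forces $k$ to be drawn from $\mathbb{N}$, i.e.\ $k = i$; likewise the judgement forces $k = t \in \mathbb{B}$ for $\tb{bool}^\stat$ and $k = q \in \mathbb{Q}$ for $\tb{real}^\stat$. Applying the erasure equation $|k^\stat| = k$ from Definition~\ref{def:sim} then delivers the stated equality in each of the three conclusions.

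The main obstacle here is not technical depth but notational care: the value grammar is given only informally in prose, so the whole proof hinges on pinning down precisely the three admissible shapes of values and reading the binding-time annotation off each shape. Once that enumeration is fixed, the lemma reduces to inversion on a handful of typing rules and requires no induction and no appeal to the earlier substitution or type-preservation lemmas.
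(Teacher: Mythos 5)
Your proof is correct and follows essentially the same route as the paper's: the paper likewise observes that a static value must be a constant (or vector of constants) and then appeals to the typing rules of Fig.~\ref{fig:LBsyntaxtype} to pin down which literal it is; your write-up merely spells out the two eliminations (dynamic expression by annotation mismatch, vector by product-type mismatch) and the final inversion more explicitly than the paper does.
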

\begin{proof}
When the binding time of a value $v^b$ is static, by the  definition, $v$ can only be a constant or a vector of constants. And by typing rules in Fig.~\ref{fig:LBsyntaxtype}, we can prove the lemma above.
\end{proof}

\begin{theorem}{Type safety of specialization} Let $\btenvd$ denote $\sets{x_j:\btt{\dyn}}$ and $\forall \btenvd, e, s, b,\vb,\done$ 
\[\btenvd \vdash e:\type^b \wedge e^{b} \ereduce r \implies r \neq \err \wedge \btenvd \vdash r:\type^b  \]
\[\btenvd \vdash s^b \wedge s^{b} \sreduce r \implies r \neq \err \wedge \btenvd \vdash r \]
\end{theorem}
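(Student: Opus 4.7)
The plan is to prove both conjuncts by simultaneous induction on the derivations $e^b \ereduce r$ and $s^b \sreduce r$. The type-preservation clause ($\btenvd \vdash r : \type^b$ or $\btenvd \vdash r$) is already delivered by Lemma~\ref{lemma:typepreserve} whenever $r$ is a value $v^b$ or a normal-form equation $w^b$, so the real content of the theorem is ruling out $r = \err$. To make the induction go through, I would first mildly generalize the hypothesis so that the environment need not be literally $\btenvd$ but may be any extension of it by freshly bound indices of the form introduced by the $\forall$ rule; a short side argument shows that such extensions still have every program variable dynamic (the index's binding time is forced by the binding time of the iterated vector, which for a dynamic program variable is dynamic).

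For the base cases (constants, unprimed and primed variables), the reduction rules yield either a static constant or a dynamic residual expression, never $\err$. For the inductive cases of $\ereduce$, I would proceed rule by rule. For static function application $\f(e^\stat)^\stat$, the argument specializes to some $v^\stat$ by the induction hypothesis, and Lemma~\ref{lemma:staticvalue} forces $v^\stat$ to be a concrete constant of the appropriate base type, so the built-in primitives (taken to be total on well-typed inputs) return a defined result. The critical case is static vector indexing $e_1^b(i^\stat)^{b_i}$: the typing rule already forces $i < m$, and by IH the vector specializes to a value of length $m$, so the lookup is always in bounds. Dynamic compositions, vector constructions, and total or partial derivatives (when the differentiation expression reduces to a variable) are handled by the chain-rule auxiliary functions, which are defined by induction on the already-specialized subexpression and hence never raise $\err$.

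For $\sreduce$, the cases follow a parallel pattern, with the additional ingredient that directed equations substitute their specialized right-hand side into subsequent equations. I would invoke the substitution-preservation lemma to keep the typing in place, and the free-variable side condition attached to directed equations in Section~\ref{section:specialization} guarantees that substitution is well-scoped and does not create fresh unsolved dependencies. Conditionals, quantifiers, and sets of equations simply dispatch to the subderivations and combine the resulting normal-form equations in the obvious way.

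The main obstacle, as the paper itself flags at the end of Section~\ref{section:specialization}, is the partial-derivative case $\pder{e_1^{b_1}}{e_2^{b_2}}$ when $e_2^{b_2}$ does not specialize to a variable: this is analogous to a division-by-zero error and is explicitly not caught by the BTA. The cleanest way to reconcile this with the statement as written is to observe that in that situation $\ereduce$ simply has no applicable rule and therefore does not produce an $\err$ token at all, so the specializer gets stuck rather than emitting an error value; genuine $\err$ production is confined to rules whose preconditions (typically index out of bounds or a bad literal shape) are already ruled out in our setting by the IH together with Lemma~\ref{lemma:staticvalue}. The subsidiary bookkeeping issue of preserving the ``every program variable is dynamic'' invariant across the substitution step in the equation-set case is routine once the generalized induction hypothesis is in place.
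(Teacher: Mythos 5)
Your overall architecture --- simultaneous induction on the reduction derivations, observing that Lemma~\ref{lemma:typepreserve} already supplies the typing clause so that the real content is $r \neq \err$, and reading the uncatchable partial-derivative case as the relation being \emph{stuck} (no $r$ derived at all) rather than producing $\err$, so the implication is vacuous there --- is sensible and consistent with the lemma scaffolding the paper sets up (the paper's own proof lives in a supplementary document, so that scaffolding is what I compare against). However, two steps do not hold up as written. First, your generalization of the induction hypothesis to environments ``extended by freshly bound indices, which remain all dynamic'' is false precisely in the intended use case: in the rule for $\forall n \in e.\ s$ the binding time of $n$ is inherited from $e$, and $e$ is typically a \emph{static} range (e.g.\ $\forall i \in \{1..|q|\}$ in the running example), so the extension introduces a static binding and your invariant that every variable in the environment is dynamic breaks. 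The generalization is also unnecessary: the $\sreduce$ rule for $\forall$ in Fig.~\ref{fig:semantics} substitutes the values $v_j^{b}$ for $n$ \emph{before} recursing into the body, so the correct move is the Substitution type-preservation lemma (which you already invoke for directed equations) followed by the IH under the unchanged $\btenvd$.

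Second, your treatment of static vector indexing covers only the literal-index typing rule, where $i < m$ is an explicit side condition. The general rule for $e_1^{b_1}(e_2^{b_2})^{b}$ in Fig.~\ref{fig:LBsyntaxtype} permits $e_2$ to be a static, non-literal $\tb{nat}$ expression with no bound on its value, and the paper names exactly this situation --- a static index specializing to a number larger than the vector --- as the way $\err$ arises. Excluding that outcome under $\btenvd$ is therefore the crux of the $r \neq \err$ claim, and ``the typing rule already forces $i < m$'' does not discharge it; you need an explicit argument (or an explicit appeal to a language convention) for why a well-typed closed static index cannot evaluate out of bounds. Until that case is closed, the proof has a gap at the one place where $\err$ is actually generated.
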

Due to the page limits, the proof can be found in Section 1 of the supplementary document.

\section{Algorithmic Specification}\label{section:Implementation}
This section presents an algorithmic specification of the BTA.  First we introduce the constraint type and the constraint 
generation function. We proceed by providing a normalization method that guarantees to find the unique
\emph{minimal solution}, if one exists. At last,  we show that specification is faithful to the declarative BTA.
 
\subsection{Generating Binding Time Constraints}  
Fig.~\ref{fig:constraint-gen} defines Label $l$, which can be \tb{root} or a label indexed by a natural number.  For example, let the label of equation $x = 1$ be $l$, and the label for $x$ and $e$ be $l_1$ and $l_2$ respectively.  A Binding Time Expression $B$ can be static, dynamic or a label. A constraint $c$ is a partial order $\nlt$ between two binding time expressions.  Constraint is satisfied in three cases $\stat \nlt \dyn, \stat \nlt \stat \ \text{and}\  \dyn \nlt \dyn $.

\subsection{Control Flow and Control Scope}
Because of conditional equations, a variable may have a different binding time depending on where it appears.  For example:
\begin{verbatim}
x = 1, if t < 5 then y = x  else y' = x
\end{verbatim}
The value of \textit{x} is statically known for both branches but  value of \textit{y} is only static in the first branch. 
To handle the control scope issue, we build an auxiliary \emph{global environment}  while labeling the program. 
\begin{defn}\label{def:scopeenvironment}
A total map $\scopeenv: \text{Variable} \to \labels \ \cup \ {\dyn}$ is called \emph{local environment}.\qed
\end{defn}
\begin{defn}\label{def:labelling}
 A map $\rho: l \to \scopeenv$ is called \emph{global environment}.  To look up the defining label of a variable $x$
 inside scope $\scope$, we first find the corresponding \emph{local environment} $\scopeenv$, then apply variable
 x in it.  That is to say $\rho(\scope)(x)$, we  abbreviate it to $\rho(\scope,x)$ in the rest of this section.\qed
 \end{defn}
 Let $\labelE : s \to \rho$ be a construct function that takes a  equation $s$ and returns a
 \emph{global environment} defined as follows, assuming the label of each equation is $l$, the scope is $\scope$ and
 starting global environment be $\rho$:
 \[
 	\begin{array}{l}
 		\labelE(\scope, \sets{s_j}, \rho) =  \labelE(\scope, \{s_j\}^{j \in 2...m}, \labelE(\scope,s_1,\rho))\\
		\\
		\labelE(\scope, \setof{x = e}_l, \rho) = \rho \uplus (\scope \to \rho(\scope) \uplus (x \to l_1))\\
				 \\
	 	\labelE(\scope, \{\tb{if}\ e \ \tb{then}\ s_1 \ \tb{else}\ s_2\}_l,\rho) =  \\
		\qquad  \rho \uplus \labelE(l_2, s_1,\rho)
														    \uplus \labelE(l_3, s_2,\rho)\\									   
 	\end{array}\]

When constructing a \emph{global environment}, the starting scope label $l$ is \tb{root}, and changed to the label of branches when inductively constructing inside a conditional equation. The example above has the following \emph{global environment}:                                                                    
\[
\begin{array}{l}
				 \{\tb{root} \to \{x \to {\tb{root}_1}_1\}, {\tb{root}_2}_2 \to \{x \to {\tb{root}_1}_1, y -> {{{{\tb{root}_2}_2}}_1}_1,\}\\
				\ \  {\tb{root}_2}_3 \to \{ x \to {\tb{root}_1}_1, y' \to {{{{\tb{root}_2}_3}}_1}_1 \} \}
\end{array}\]

\begin{defn}\label{def:constraint-sets}
	$C = \sets{c_j}$  is called a \emph{constraint set}.
	The \emph{labels} of $C$ are defined as
	$$\labelSet{C} = \setbar{l}{\exists \, B \mbox{ st } l \nlt B \in C \mbox{ or } B \nlt l \in C}.$$

	\noindent
    $C$ is in \emph{normal form} if for all $c\in C$, one of the following is true: 
	\[c \equiv \stat \nlt l,\   c \equiv l \nlt \dyn,\ c \equiv l \nlt \tilde{l}\]
	and is of \emph{error form} if
	\(\dyn \nlt \stat\ \in C\).
	The sets of $C$ in normal form and of error are denoted by $\CNF$ and $\CError$, respectively.
	\qed
 \end{defn}

\begin{figure*}[ht]
$
\small
\centering
\begin{array}{l}
\multicolumn{1}{l}{\fbox{\text{Syntax}}}\\
\begin{tabular}{   >{$}r <{$} >{$} c<{$} >{$}l<{$}   }

\labels \quad l & := & \tb{root} \ |\  {l_i} \ \text{where} \  i \in \mathbb{N} \\
\text{Binding Time Expression} \quad B & := & b \ | \ l \\
\BTC  \quad c&:=& B \nlt B \\
\text{Satisfied Constraint}  \quad \vdash c&:=& \vdash \stat \nlt \dyn \ | \  \vdash \stat \nlt \stat \ | 
								       \vdash \dyn \nlt \dyn   \\							   
\end{tabular}
\\ \\
\multicolumn{1}{l}{\fbox{\text{Constraint Generation Function}}}\\
\\
\def\arraystretch{1.5}
\begin{array}{l}
\begin{tabular}{ >{$} l <{$} | >{$} l <{$} |}

\text{Input}\quad	e 					& \text{Output}\quad |[e|]_{\scope, l, \rho}			 \\
	\hline
	k					& \setof{l \nlt \stat} 											 \\
	n					& \setof{\rho(\scope, n) \nlt l, \ l \nlt \rho(\scope, n)} 		 \\
	x'					& |[ x|]_{\scope, l_1, \rho} \ 
						  \cup \setof{\rho(\scope, x') \nlt l, \dyn \nlt l_1 }			 \\
	\sequence{e_j}		& \bigcup_\indexing |[e_{j}|]_{\scope, l_j, \rho} \ 
						  \bigcup_\indexing \setof{l_j \nlt l}							 \\
	e_1(e_2)			& |[e_1|]_{\scope, l_1, \rho} \ 
						  \cup |[e_2|]_{\scope, l_2, \rho} \ \\ & 
						  \cup \setof{l_1 \nlt l, \ l_2 \nlt l}							 \\
	f(e)					&  |[ e|]_{\scope, l_1, \rho} \  
						  \cup \setof{l_1 \nlt l}	 						 \\
	\der{e}				& |[ e|]_{\scope, l_1, \rho} \  
						  \cup \setof{l_1 \nlt l}		 								 \\
	\pder{e_1}{e_2}		& |[e_1|]_{\scope, l_1, \rho} \ 
						  \cup |[e_2|]_{\scope, l_2, \rho} \ \\ &  
						  \cup \setof{l_1 \nlt l, \ l_2 \nlt l} 						 \\

	\end{tabular} \quad
\begin{tabular}{| >{$} l <{$} | >{$} l <{$} }
\text{Input}\quad	s 					  &  \text{Output}\quad |[s|]_{\scope, l, \rho} \\
	\hline                
	x = e 				  & |[x|]_{\scope, l_1, \rho} \ 
						    \cup |[e|]_{\scope, l_2, \rho} \\ &  
						    \cup \setof{l_1 \nlt l, \ l_2 \nlt l, \ l_2 \nlt l_1}  
																 \\
 	x^{+} = e 				  & |[e|]_{\scope, l_2, \rho} \ 
						    \cup \setof{l_2 \nlt l}										 \\
	e_1 = e_2, e_1 \neq x & |[e_1|]_{\scope, l_1, \rho} \ 
							\cup |[e_2|]_{\scope, l_2, \rho} \ \\& 
							\cup \setof{l_1 \nlt l, \ l_2 \nlt l}						 \\
	\tb{if} \ e \ \tb{then} \ s_1 \ 
						  & |[e|]_{\scope, l_1, \rho} \ 
						    \cup |[s_1|]_{l_2, l_2, \rho} \ 
						    \cup |[s_2|]_{l_3, l_3, \rho} \ \\ \tb{else} \ s_2 & 
							\cup \setof{l_1 \nlt l, \ l_2 \nlt l, \ l_3 \nlt l}			 \\
	\forall n \in e \ . \ s 
						  & |[n|]_{\scope, l_1, \rho} \ 
						   
						    \cup |[s|]_{\scope, l_3, \rho[(\scope, n) \mapsto l_1]} \ \\ &
							 \cup |[e|]_{\scope, l_2, \rho} \ \cup \setof{l_2 \nlt l_1} \ 
						    \cup \setof{l_3 \nlt l}										 \\
	\sets{s_j}			  & \bigcup_\indexing |[s_j|]_{\scope, l_j, \rho} \ 
						    \bigcup_\indexing \{l_j \nlt l\}							 \\

\end{tabular}
\end{array}
\end{array}$
\centering\caption{Constraint Generation}\label{fig:constraint-gen}
\end{figure*}

Fig.~ \ref{fig:constraint-gen} defines function $|[\cdot|]$ that takes $e$ in scope $\scope$ and \emph{global environment} $\rho$ returns a constraint set.   
It generates a constraint $l \nlt \stat$ for constant $k$.  For an unprimed variable $n$, it generates a  constraint between the occurrent label $l$ and the 
definition label in the \emph{global environment} $\rho(\scope,n)$.  For variable with
primes $x'$,  additional constraints between $\dyn$ and labels of all its lower derivatives in the \emph{global environment} are added. 
 For $\vectoromit{e_j}$, $e_1(e_2)$,$ f(\vectoromit{e_j})$,$\der(e)$ and $\pder{e_1}{e_2}$ , we generate constraints between label of  $e$ and all its subexpressions $e_i$, that is $\bigcup_{i}  \{l_i \nlt l\}$.  Then inductively apply $|[\cdot|]$ to all the subexpressions.\\
Function $|[\cdot|]$ for equation is defined in a similar way.
In the case of  directed equation $n = e$ and $x' = e$, the binding time of  corresponding definition label depends on right hand side expression $e$, captured by $l_1 \nlt \rho(\scope,n)$ and $l_1 \nlt \rho(\scope,x')$ respectively.  For $x ^{+}= e$, $e_1 = e_2$ and 
$\setsomit{s_j}$, the binding time depends on subexpressions and sub-constraints. 
In the case for $\tb{if}\ e \ \tb{then} \ s_1 \ \tb{else}\ s_2 $, it changes the scope label from $\scope$ to $l_2$ and $l_3$ when inductively apply $|[\cdot|]$ to $s_1$ and $s_2$. In the case of $\forall n \in e, \ s$, it adds a new mapping from binding variable into the \emph{global environment} $\rho$ when inductively generate constraint from $s$.
\begin{defn}\label{def:substitution}
	A map $\sigma\colon \labels \to \BT$ is called a \emph{substitution}, and $\sigma$
	is the identity function on labels not in domain of $\sigma$.
	As $\labels$ is finite, $\domain{\sigma}=\{l_1,...,l_n\}$. Thus, $\sigma$ may be described by
	\( [l_1 \mapsto \sigma(l_1), \ldots,  l_n \mapsto \sigma(l_n)]\).

	\noindent
	Given two substitutions $\sigma$ and $\sigmahat$, the \emph{extension} of $\sigma$ with $\sigmahat$ is defined as
	$\twosubs{\sigma}{\sigmahat}\colon \domain{\sigma} \cup \domain{\sigmahat} \to \BT$ such that
	\[\begin{array}{cl}
	
		\twosigs(l) &= \begin{cases} \sigma(l) &\textrm{if } l \in \domain{\sigma},\\ 
				                     \sigmahat(l) &\textrm{otherwise.}
					   \end{cases} \\
	\end{array}\]
\end{defn}

\begin{defn}\label{def:solution}
	A substitution $\sigma$ is a \emph{solution} to $C$ if for all $c \in C$ it holds:
 	\[ c \equiv \ B_1 \nlt B_2 \ \implies \vdash \sigma(B_1) \nlt \sigma(B_2)\]
	We denote this by $\sigma \vdash C$.  
	The substitution $\sigma_c$ is a \textit{minimum solution} to $C$, denoted by $\sigma_c \vdash_{min} C$, when
	\begin{itemize}
		\item $\sigma_c \vdash C$, 
		\item $\domain{\sigma_c} = \labelSet{C}$, 
		\item $\sigma \vdash C, l \in \labelSet{C} \implies \vdash \sigma_c(l) \nlt \sigma(l)$.\qed
	\end{itemize}
\end{defn}

\begin{lemma}[Uniqueness of minimal solution]\label{lemma:unique-minimal}  
	\[ \forall C,\sigma_1, \sigma_2. \qquad \sigma_1, \sigma_2 \vdash_{min} C \implies \sigma_1 = \sigma_2. \]
\end{lemma}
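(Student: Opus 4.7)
The plan is to reduce this to the standard antisymmetry argument for minima in a partial order. Observe first that a substitution, by Definition~\ref{def:substitution}, maps labels into $\BT = \{\stat, \dyn\}$, and that the satisfied constraint relation $\vdash B_1 \nlt B_2$ of Fig.~\ref{fig:constraint-gen} admits exactly the three pairs $\stat \nlt \stat$, $\dyn \nlt \dyn$, and $\stat \nlt \dyn$. In particular, $\dyn \nlt \stat$ is \emph{not} satisfied, so $\nlt$ is antisymmetric on $\BT$: for any $b_1, b_2 \in \BT$, if $\vdash b_1 \nlt b_2$ and $\vdash b_2 \nlt b_1$, then $b_1 = b_2$.

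Next I would unfold the two hypotheses $\sigma_1 \vdash_{min} C$ and $\sigma_2 \vdash_{min} C$. From Definition~\ref{def:solution}, both substitutions are solutions of $C$ with $\domain{\sigma_1} = \domain{\sigma_2} = \labelSet{C}$. Applying the minimality clause for $\sigma_1$ with $\sigma_2$ taken as the ``other'' solution yields $\vdash \sigma_1(l) \nlt \sigma_2(l)$ for every $l \in \labelSet{C}$. Symmetrically, applying the minimality clause for $\sigma_2$ with $\sigma_1$ in place of the arbitrary solution yields $\vdash \sigma_2(l) \nlt \sigma_1(l)$ for every $l \in \labelSet{C}$.

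Combining these with the antisymmetry observation gives $\sigma_1(l) = \sigma_2(l)$ for all $l \in \labelSet{C}$. Since $\domain{\sigma_1} = \domain{\sigma_2} = \labelSet{C}$, and by the convention of Definition~\ref{def:substitution} both substitutions act as the identity on labels outside their domain, the two functions agree on every label and hence $\sigma_1 = \sigma_2$.

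There is no real obstacle here; the only point that requires care is confirming that the minimality clause may legitimately be instantiated with the \emph{other} minimal solution playing the role of the universally-quantified $\sigma$. This is immediate because each $\sigma_i$ is itself a solution of $C$, and any solution is a valid witness for the $\forall \sigma$ in the minimality condition. Once this is noted, the remainder is purely the antisymmetry argument sketched above.
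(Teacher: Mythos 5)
Your proof is correct and follows essentially the same route as the paper's: both establish $\domain{\sigma_1} = \labelSet{C} = \domain{\sigma_2}$, use the minimality clause of each solution against the other to obtain $\vdash \sigma_1(l) \nlt \sigma_2(l)$ and $\vdash \sigma_2(l) \nlt \sigma_1(l)$, and conclude by antisymmetry of the satisfied-constraint relation. You merely spell out the antisymmetry observation and the identity-outside-domain convention more explicitly than the paper does.
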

\begin{proof}
	We readily have that $\domain{\sigma_1} = \labelSet{C} = \domain{\sigma_2}$.
	As both
	$\vdash \sigma_1(l) \nlt \sigma_2(l)$ and $\vdash \sigma_2(l) \nlt \sigma_1(l)$ hold
	for all $l \in \labelSet{C}$, the equality $\sigma_1(l) = \sigma_2(l)$ follows from definition of \emph{Satisfied Constraint}.
\end{proof}

\begin{lemma}[Existence of solutions]\label{lemma:cnf-minimal}
	\[ C \in \CNF \implies \exists ! \sigma_c. \  \sigma_c \vdash_{min} C. \]
\end{lemma}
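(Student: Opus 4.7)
My plan is to construct the minimum solution explicitly by exploiting the very restricted shape of constraints in $\CNF$. Define $\sigma_c(l) = \stat$ for every $l \in \labelSet{C}$, and let $\sigma_c$ be the identity on labels outside $\labelSet{C}$; then $\domain{\sigma_c} = \labelSet{C}$ by construction, which handles one of the three clauses in the definition of \emph{minimum solution} for free.

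To check that $\sigma_c \vdash C$, I would do a case analysis on each $c \in C$ according to the three shapes permitted by normal form. A constraint $\stat \nlt l$ becomes $\stat \nlt \stat$ after applying $\sigma_c$; a constraint $l \nlt \dyn$ becomes $\stat \nlt \dyn$; and a constraint $l \nlt \tilde{l}$ becomes $\stat \nlt \stat$. All three are instances of \emph{Satisfied Constraint}, so $\sigma_c$ solves every constraint in $C$. For the pointwise minimality clause, suppose $\sigma \vdash C$ and $l \in \labelSet{C}$. Since $\sigma(l) \in \{\stat, \dyn\}$, both candidate values yield a satisfied constraint $\stat \nlt \sigma(l)$, i.e.\ $\vdash \sigma_c(l) \nlt \sigma(l)$. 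Uniqueness of $\sigma_c$ then follows immediately from Lemma~\ref{lemma:unique-minimal}, giving the ``$\exists !$'' statement.

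The main conceptual point—and arguably the real content of the lemma—is that normal form is designed precisely to exclude the constraint shapes that could make the system unsatisfiable or force a label to $\dyn$, namely $\dyn \nlt \stat$ (ruled out by $C \notin \CError$) and $\dyn \nlt l$ (ruled out by the definition of $\CNF$). Once those are gone, the all-static substitution is trivially a solution, and it is trivially pointwise minimal in the order induced by $\nlt$, because $\stat$ is the least element among $\{\stat, \dyn\}$ under $\nlt$. Consequently I do not expect any serious obstacle here; the nontrivial work will live in the separate normalization procedure that transforms an arbitrary $C$ into an element of $\CNF \uplus \CError$, not in this lemma.
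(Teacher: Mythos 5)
Your proof is correct and follows essentially the same route as the paper: the all-static substitution on $\labelSet{C}$ is a solution because every normal-form constraint maps to a satisfied one, and pointwise minimality follows because $\vdash \stat \nlt b$ holds for both $b \in \set{\stat,\dyn}$, with uniqueness delegated to Lemma~\ref{lemma:unique-minimal}. If anything, your case analysis is slightly more careful than the paper's own, which lists the constraint shapes as ``$l \nlt \tilde{l}$, $\stat \nlt l$ or $\dyn \nlt l$'' --- the last apparently a typo for $l \nlt \dyn$, the form actually permitted by Definition~\ref{def:constraint-sets}.
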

\begin{proof}
	Consider $C \in \CNF$. Define a substitution $\sigma_c$ as $[l_1 \mapsto \stat, \ldots, l_n \mapsto \stat]$, where $\labelSet{C} = \set{l_1, \ldots, \l_n}$.
	As $\sigma_c$ clearly solves all constraints of the form $l \nlt \tilde{l}, \tb{S} \nlt l \ or \ \tb{D} \nlt l$ with $l, \tilde{l} \in \CNF$, it is a solution
	$\sigma_c \vdash C$.\\
	The minimality follows from $\domain{\sigma_c} = \labelSet{C}$ and from the fact that both $\vdash \stat \nlt \stat$ and $\vdash \stat \nlt \dyn$.
\end{proof}

\subsection{Normal Form and Normalization}
\begin{wrapfigure}{h!}{0.4\textwidth}
\footnotesize
\[
\arraycolsep=1pt\def\arraystretch{1.4}
\begin{array}{l|l|l|l}
    \multicolumn{1}{c	}{Redex}&  & \multicolumn{2}{c}{Result}  \\
\hline
\   C \in \CSets 					   &\  & \ \sigma 		      &\tilde{C}\\
\hline
a)~ C_0 \cup \{ \tb{S} \nlt \tb{S} \} &\to& [\ ] 				  &C_0\\
b)~ C_0 \cup \{ \tb{S} \nlt \tb{D} \} &\to& [\ ] 				  &C_0\\
c)~ C_0 \cup \{ \tb{D} \nlt \tb{D} \} &\to& [\ ] 				  &C_0\\
d)~ C_0 \cup \{ l \nlt \tb{S} \}          &\to& [l \mapsto \tb{S}] &[l \mapsto \tb{S}](C_0)\\
e)~ C_0 \cup \{\tb{D} \nlt l\}            &\to& [l \mapsto \tb{D}] &[l \mapsto \tb{D}](C_0)
\end{array}
\]
 \caption{Constraints Normalization ($\to$)} \label{fig:rewrite-rules}
\end{wrapfigure}
Lemma~\ref{lemma:cnf-minimal} shows how the \textit{minimum solution} for a normal form constraint set can be found.  This section presents a set of rewrite rules that transform any constraint set to the corresponding normal form, thus making the solution easy to find. In Fig.~\ref{fig:rewrite-rules}, a set of normalizing rewrite rules are shown.  
\begin{defn}\label{def:application-rewrite}
	For a $C \in \CSets$ the \emph{application} of a normalization rewrite rule from Fig.~\ref{fig:rewrite-rules} returns a constraint set $\tilde{C}$ and a substitution $\sigma$. 
	We denote this by $C \to^{\sigma} \tilde{C}$.  Exhaustive application is denoted by $C \to^{{}_\star\  \sigma} \tilde{C_k}$.
	\qed
\end{defn}

\begin{lemma}[Termination]\label{lemma:fixpoint} For all $C$, $\tilde{C}$, and $\sigma$, whenever $C \to^{\sigma} \tilde{C}$ then $|\tilde{C}| < |C|$.
\end{lemma}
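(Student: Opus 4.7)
The plan is to proceed by case analysis on which of the five rewrite rules from Fig.~\ref{fig:rewrite-rules} is applied, and in each case to compare $|\tilde{C}|$ directly with $|C|$. In every rule the left-hand side presents $C$ in the form $C_0 \cup \{c\}$ where $c$ is the redex constraint being rewritten, and the implicit convention is that $c \notin C_0$, so that $|C| = |C_0| + 1$.

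For the first three rules (a), (b), (c), the substitution $\sigma$ is the identity and the output is simply $\tilde{C} = C_0$. Thus $|\tilde{C}| = |C_0| = |C| - 1 < |C|$, and these cases are immediate.

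For rules (d) and (e), the redex involves a label $l$, the produced substitution is $\sigma = [l \mapsto \tb{S}]$ (respectively $[l \mapsto \tb{D}]$), and the output is $\tilde{C} = \sigma(C_0)$, meaning $\sigma$ is applied pointwise to each constraint of $C_0$. The key observation is that applying a substitution to every element of a finite set can only decrease or preserve cardinality: distinct constraints in $C_0$ may map to the same constraint in $\sigma(C_0)$ (for instance $l \nlt \tilde{l}$ and $\tb{S} \nlt \tilde{l}$ both become $\tb{S} \nlt \tilde{l}$ under rule (d)), but no new constraints can appear. Hence $|\tilde{C}| = |\sigma(C_0)| \leq |C_0| = |C| - 1 < |C|$.

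Combining both groups of cases yields the claim. The potential sticking point is the substitution cases, where one might worry that replacing a label by a binding time literal could increase the size of the constraint set; the argument above rules this out by noting that substitution is a function and that $\sigma(C_0)$ is constructed as a set (so duplicates are identified), which means the removal of the single redex constraint $c$ is already enough to guarantee the strict decrease. No induction, well-founded ordering, or auxiliary measure is needed beyond plain cardinality.
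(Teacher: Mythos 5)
Your proof is correct and takes essentially the same approach as the paper, which simply observes by inspection of Fig.~\ref{fig:rewrite-rules} that each rule removes one constraint. In fact your treatment of rules (d) and (e) is slightly more careful than the paper's one-line claim that every rule ``reduces the number of constraints by one,'' since you correctly note that applying the substitution to $C_0$ may merge constraints and so the decrease is at least one, which is all the lemma needs.
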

\begin{proof}
	By inspecting Fig.~\ref{fig:rewrite-rules}, it is obvious that every rule reduces the number of constraints by one.
\end{proof} 

\begin{lemma}[Solution preservation]\label{lemma:composition}
	\[
		C \to^{\sigma} \tilde{C} \ \wedge \ \tilde{\sigma} \vdash \tilde{C} \implies \sigma \se \tilde{\sigma} \vdash C.
	\]
\end{lemma}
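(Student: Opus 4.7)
My plan is to proceed by case analysis on which of the five rewrite rules (a)--(e) in Fig.~\ref{fig:rewrite-rules} was applied to obtain $C \to^{\sigma} \tilde{C}$. In each case I know exactly what $\sigma$ is, what constraint was removed from $C$, and what (if any) substitution was applied to the remaining constraints to form $\tilde{C}$. The obligation $\sigma \se \tilde{\sigma} \vdash C$ then splits into two parts: (i)~the extended substitution satisfies the single constraint that was eliminated, and (ii)~it satisfies every constraint in $C_0 := C \setminus \{\text{eliminated}\}$.

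For the three ground cases (a), (b), (c) the substitution $\sigma$ is empty, so $\sigma \se \tilde{\sigma} = \tilde{\sigma}$ and $\tilde{C} = C_0$. The eliminated constraint is one of $\stat \nlt \stat$, $\stat \nlt \dyn$, $\dyn \nlt \dyn$, each of which is a Satisfied Constraint by definition and remains so after applying any substitution (substitutions are the identity on binding time constants). So (i) is immediate and (ii) is exactly the hypothesis $\tilde{\sigma} \vdash \tilde{C}$.

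The interesting cases are (d) and (e), which are symmetric; I describe (d). Here $\sigma = [l \mapsto \stat]$, the removed constraint is $l \nlt \stat$, and $\tilde{C} = [l \mapsto \stat](C_0)$. Part (i) is immediate since $(\sigma \se \tilde{\sigma})(l) = \stat$, giving the satisfied constraint $\stat \nlt \stat$. For part (ii) the key auxiliary fact I would first establish is the coincidence
\[
(\sigma \se \tilde{\sigma})(B) \;=\; \tilde{\sigma}\bigl(\sigma(B)\bigr) \quad \text{for every binding-time expression } B,
\]
which is checked by a three-way inspection of $B$: if $B$ is a binding time constant both sides equal $B$; if $B = l$ then both sides equal $\stat$ (the left by $l \in \domain{\sigma}$, the right because $\sigma(l) = \stat$ and $\tilde\sigma$ is the identity on constants); and if $B$ is any other label $l'$ then $l' \notin \domain{\sigma}$ so both sides equal $\tilde{\sigma}(l')$. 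Applied pointwise, this identity converts the hypothesis $\tilde{\sigma} \vdash \tilde{C}$ constraint by constraint into $\sigma \se \tilde{\sigma} \vdash C_0$, completing (ii).

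The main obstacle is bookkeeping rather than conceptual: one must be careful in the coincidence identity that the definition of extension $\twosigs$ gives $\sigma$ precedence on $\domain{\sigma}$, so that applying $\sigma$ first and then $\tilde{\sigma}$ (as the substituted constraint set requires) produces the same binding time as consulting the extended substitution directly. Once this identity is in hand, each rewrite case becomes a one-line check and the lemma follows.
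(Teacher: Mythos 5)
Your proof is correct and follows essentially the same route as the paper's: case analysis on the five rewrite rules, with (a)--(c) handled by the empty substitution and (d)--(e) by observing that the eliminated constraint is solved by $\sigma$ and that $\tilde{\sigma} \vdash \sigma(C_0)$ transfers to $\sigma \se \tilde{\sigma} \vdash C_0$ because $\domain{\sigma}$ is disjoint from the labels of $\tilde{C}$. The only difference is presentational: you make the coincidence identity $(\sigma \se \tilde{\sigma})(B) = \tilde{\sigma}(\sigma(B))$ explicit, which the paper leaves implicit in its domain-disjointness remark.
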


\begin{proof}
	By case analysis of the normalization rewrite rules.
	a), b), c): Since $\sigma$ is the empty substitution, it is clear that $\sigma \se \tilde{\sigma} = \tilde{\sigma}$. 
	As $\labelSet{C} = \labelSet{\tilde{C}}$, it follows that $\sigma \se \tilde{\sigma} \vdash C$.
	
	\noindent
	d), e): The constraint removed from $C$ is solved by $\sigma$. Thus, $\tilde{\sigma} \vdash \sigma(C)$ follows from $\tilde{\sigma} \vdash \tilde{C}$. 
	As $\labelSet{\sigma(C)} = \labelSet{\tilde{C}}$ and $\domain{\sigma} \cap \labelSet{\tilde{C}} = \varnothing$,
	we obtain $\sigma \se \tilde{\sigma} \vdash C$.
\end{proof}

\begin{lemma}[$\CNF$ or $\CError$] For all $C$ and $\tilde{C}$,  whenever $C \to^{*\ \sigma}  \tilde{C}$ then
	\[\tilde{C} \in \CNF  \ \vee \   \tilde{C} \in \CError.\]
\end{lemma}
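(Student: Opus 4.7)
The plan is to proceed by a simple case analysis on the shape of constraints that can remain after exhaustive rewriting. Termination of the rewriting process itself is already guaranteed by Lemma \ref{lemma:fixpoint} (each rule strictly decreases the cardinality of the constraint set), so $C \to^{*\ \sigma} \tilde{C}$ indeed produces a terminal $\tilde{C}$ to which no rule from Fig.~\ref{fig:rewrite-rules} applies. The whole content of the lemma therefore reduces to classifying which individual constraints can survive in such a terminal set.

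First I would enumerate the syntactic shapes a single constraint $c \equiv B_1 \nlt B_2$ can take, noting that each $B_i$ is either $\stat$, $\dyn$, or a label $l$. This gives nine possibilities. Five of them are matched on the left-hand side of some rewrite rule in Fig.~\ref{fig:rewrite-rules}, namely $\stat \nlt \stat$ (rule a), $\stat \nlt \dyn$ (rule b), $\dyn \nlt \dyn$ (rule c), $l \nlt \stat$ (rule d), and $\dyn \nlt l$ (rule e). Since $\tilde{C}$ is terminal, none of these shapes can occur in $\tilde{C}$. The remaining four shapes are $\stat \nlt l$, $l \nlt \dyn$, $l \nlt \tilde{l}$, which are exactly the three normal-form shapes of Definition \ref{def:constraint-sets}, and $\dyn \nlt \stat$, which is the error shape.

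From this I would conclude the dichotomy directly: if $\dyn \nlt \stat \in \tilde{C}$, then $\tilde{C} \in \CError$ by definition; otherwise every $c \in \tilde{C}$ is one of the three normal-form shapes, so $\tilde{C} \in \CNF$. The two possibilities are exhaustive and cover all terminal sets, establishing the disjunction.

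There is no real obstacle here beyond care in the bookkeeping of the nine cases, since the rewrite rules were designed precisely so that exactly the constraints not already in normal form (except the single error shape $\dyn \nlt \stat$) match some left-hand side. The only subtlety worth flagging is that the rules have to be shown \emph{collectively exhaustive} over the non-normal, non-error shapes --- the case analysis above does this explicitly, which is why I would make the nine-case enumeration the core of the argument rather than appeal to a more abstract confluence or normalization theorem.
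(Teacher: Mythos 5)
Your proposal is correct and follows essentially the same route as the paper's own proof: both argue that in a terminal set no rewrite rule applies, so the only surviving constraint shapes are $\stat \nlt l$, $l \nlt \dyn$, $l \nlt \tilde{l}$, and $\dyn \nlt \stat$, giving the dichotomy directly. Your explicit nine-case enumeration merely spells out the exhaustiveness claim that the paper states in one line.
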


\begin{proof}
 	By definition of \emph{exhaustive application}, no rewrite rule can apply to $\tilde{C}$, then only four types of 	constraints may appear in $\tilde{C}$.
	Namely $\stat \nlt l$, $l \nlt \dyn$, $l \nlt \tilde{l}$ or $\dyn \nlt \stat$. 
	If $\dyn \nlt \stat$ is present, then $\tilde{C} \in \CError$, else $\tilde{C} \in \CNF$.
\end{proof}

\begin{lemma}[Minimal solution preservation]\label{lemma:composition-minimal}
	\[C \to^{\sigma} \tilde{C} \ \wedge \ \sigma_{\tilde{c}} \vdash_{min} \tilde{C} \implies \sigma \se \sigma_{\tilde{c}} \vdash_{min} C  .\]
\end{lemma}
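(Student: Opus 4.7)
The plan is to unfold the target $\sigma \se \sigma_{\tilde{c}} \vdash_{min} C$ into the three clauses of Definition~\ref{def:solution}: (i) solvability, (ii) $\domain{\sigma \se \sigma_{\tilde{c}}} = \labelSet{C}$, and (iii) pointwise minimality on $\labelSet{C}$. Clause (i) is immediate from Lemma~\ref{lemma:composition}. For (ii) and (iii) I would case-split on which of the five rewrite rules of Fig.~\ref{fig:rewrite-rules} produced $\tilde{C}$, since the form of $\sigma$ and the relationship between $\labelSet{\tilde{C}}$ and $\labelSet{C}$ are dictated by that choice.

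For the ``trivial removal'' rules a), b), c), the discarded constraint mentions only constants, so $\sigma = [\ ]$, $\labelSet{\tilde{C}} = \labelSet{C}$, and $\sigma \se \sigma_{\tilde{c}} = \sigma_{\tilde{c}}$; both (ii) and (iii) then reduce directly to the minimality hypothesis on $\sigma_{\tilde{c}}$, using the obvious fact that any solution of $C$ is also a solution of $\tilde{C}$. For rule d), where $\sigma = [l \mapsto \stat]$ and $\tilde{C} = [l \mapsto \stat](C_0)$, one checks that $l \notin \labelSet{\tilde{C}}$ while $l \in \labelSet{C}$, and that every other label of $C$ survives in $\tilde{C}$, giving $\labelSet{C} = \labelSet{\tilde{C}} \uplus \setof{l}$ and hence (ii). Minimality at $l$ itself follows because the removed constraint $l \nlt \stat$ pins $\sigma'(l) = \stat$ for every solution $\sigma' \vdash C$, matching $(\sigma \se \sigma_{\tilde{c}})(l) = \stat$; rule e) is symmetric, with $\dyn \nlt l$ pinning $\sigma'(l) = \dyn$.

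The step I expect to be the main obstacle is minimality on $\labelSet{\tilde{C}}$ for rules d) and e), since this requires transporting an arbitrary solution $\sigma' \vdash C$ down to a solution of $\tilde{C}$ before the minimality hypothesis on $\sigma_{\tilde{c}}$ can be invoked. The key observation is that each constraint $c \in \tilde{C}$ arises from some $c_0 \in C_0$ by substituting $l$ with $\stat$ (or $\dyn$), and since $\sigma'$ already assigns $l$ that same binding time, $\sigma'(c) = \sigma'(c_0)$; thus $\sigma' \vdash C$ entails that the restriction of $\sigma'$ to $\labelSet{\tilde{C}}$ solves $\tilde{C}$. Applying the minimality clause of $\sigma_{\tilde{c}} \vdash_{min} \tilde{C}$ to this restriction, and noting that $(\sigma \se \sigma_{\tilde{c}})(l') = \sigma_{\tilde{c}}(l')$ for every $l' \in \labelSet{\tilde{C}}$, yields the required pointwise inequality on all of $\labelSet{C}$, completing (iii).
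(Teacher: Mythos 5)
Your proposal is correct and follows essentially the same route as the paper's proof: appeal to Lemma~\ref{lemma:composition} for solvability, a disjoint-union count for the domain condition, and a case split for minimality between the label eliminated by the rewrite (pinned to $\stat$ or $\dyn$ by the removed constraint) and the labels surviving into $\tilde{C}$, where an arbitrary solution of $C$ is transported down to a solution of $\tilde{C}$ so that minimality of $\sigma_{\tilde{c}}$ applies. The only difference is organizational (you case-split by rewrite rule rather than by where the label lives) and that you spell out the justification for the paper's ``Clearly $\tilde{\sigma} \vdash \tilde{C}$'' step, which is a useful elaboration but not a different argument.
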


\begin{proof}
	Lemma~\ref{lemma:composition} implies that $\sigma_c = \sigma \se \sigma_{\tilde{c}} \vdash C$.
	To show minimality first note that $\domain{\sigma_c} = \domain{\sigma} \cup \labelSet{\tilde{C}} = \labelSet{C}$ with the union being disjoint.
	Now assume that $l \in \labelSet{C}$, $\tilde{\sigma} \vdash C$ and consider the following two cases.
	
	\noindent
	$l \in \labelSet{\tilde{C}}$: Clearly $\tilde{\sigma} \vdash \tilde{C}$. As $\sigma_{\tilde{c}}$ is minimal, $\vdash \sigma_{\tilde{c}}(l) \nlt \tilde{\sigma}(l)$. 
	Thus, from $\sigma_c(l) = \sigma_{\tilde{c}}(l)$ we get that $\vdash \sigma_c(l) \nlt \tilde{\sigma}(l)$. 
	
	\noindent
	$l \in \domain{\sigma}$: $\sigma$ was obtained by applying rule e) or d). Thus, either $\sigma_c(l) = \sigma(l) = \stat$ or  
	$\sigma_c(l) = \sigma(l) = \dyn = \tilde{\sigma}(l)$. In both cases $\vdash \sigma_c(l) \nlt \tilde{\sigma}(l)$ holds.
\end{proof}

\begin{theorem}[Unique minimal solution]\label{thm:minimal}
	\[
		C \to^{*\ \sigma} \tilde{C} \ \wedge \ \tilde{C} \in \CNF \implies \exists! \sigma_c.~\sigma_c \vdash_{min} C .
	\]
\end{theorem}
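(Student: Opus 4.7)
The plan is to proceed by induction on the length $k$ of the rewrite derivation $C \to^{*\ \sigma} \tilde{C}$. This is a well-founded induction: Lemma~\ref{lemma:fixpoint} guarantees that each single rewrite step strictly decreases the cardinality of the constraint set, so any derivation to a normal form has finite length.

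In the base case $k = 0$, we have $C = \tilde{C} \in \CNF$ and $\sigma$ is the empty substitution. Lemma~\ref{lemma:cnf-minimal} directly supplies a minimal solution $\sigma_c \vdash_{min} C$, and Lemma~\ref{lemma:unique-minimal} guarantees it is unique. For the inductive step $k > 0$, decompose the derivation as $C \to^{\sigma_0} C' \to^{*\ \sigma'} \tilde{C}$ with $\sigma = \sigma_0 \se \sigma'$, where the tail derivation has length $k-1$ and still ends in $\tilde{C} \in \CNF$. The inductive hypothesis yields a (unique) $\sigma_{c'} \vdash_{min} C'$. Applying Lemma~\ref{lemma:composition-minimal} to the single rewrite step $C \to^{\sigma_0} C'$ produces $\sigma_0 \se \sigma_{c'} \vdash_{min} C$, which settles existence. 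Uniqueness of the minimal solution to $C$ then follows immediately from Lemma~\ref{lemma:unique-minimal}.

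The main obstacle is, honestly, very mild: essentially all the content has been pre-packaged in the preceding lemmas, and the proof amounts to threading them together along a rewriting derivation. The one point that deserves explicit care is that the statement mentions a particular $\sigma$, while the conclusion asserts existence and uniqueness of $\sigma_c$ with no reference to $\sigma$; this is harmless because minimality is a property of $C$ alone, so by Lemma~\ref{lemma:unique-minimal} any two derivations into a normal form must yield the same $\sigma_c$, even if the intermediate rewriting choices (and hence the substitutions $\sigma_0, \sigma'$) differ. Thus the induction scheme above proves the theorem regardless of which derivation to $\tilde{C}$ one uses.
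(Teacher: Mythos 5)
Your proof is correct and follows essentially the same route as the paper: both seed a minimal solution at the normal form $\tilde{C}$ via Lemma~\ref{lemma:cnf-minimal} and propagate it back along the rewrite derivation using Lemma~\ref{lemma:composition-minimal}, with uniqueness supplied by Lemma~\ref{lemma:unique-minimal}. The paper phrases the induction as an explicit iteration over the $k$ steps $\sigma_i \se \ldots \se \sigma_k \se \sigma_{\tilde{c}} \vdash_{min} \tilde{C}_{i-1}$ rather than a formal induction on derivation length, but the content is identical.
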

\begin{proof}
	We obtain $\sigma_{\tilde{c}}$ such that $\sigma_{\tilde{c}} \vdash_{min} \tilde{C}$ from Lemma~\ref{lemma:cnf-minimal}. 
	By definition we have that $\exists k \in \naturals$ and $\sigma_1, \ldots, \sigma_k$ such that 
	$C \to^{\sigma_1} \tilde{C_1} \to^{\sigma_2} \ldots \to^{\sigma_k} \tilde{C}_k = \tilde{C}$ and $\sigma = \sigma_1 \se \ldots \se \sigma_k$.
	Thus, by introducing $\tilde{C}_0 = C$, we get $\sigma_i \se \ldots \se \sigma_k \se \sigma_{\tilde{c}} \vdash_{min} \tilde{C}_{i-1}$ 
	for all $1 \leq i \leq k$ from Lemma~\ref{lemma:composition-minimal}. Thus, $\sigma_c \vdash_{min} C$ with $\sigma_c = \sigma \se \sigma_{\tilde{c}}$.
\end{proof}
Combine Theorem~\ref{thm:minimal} and Lemma~\ref{lemma:unique-minimal},  the \textit{minimal solution} to any constraint set can be found as follows: first normalize the constraint set by applying constraint rewrite rules in figure 3, and then find the unique minimal solution to the normal form constraint set. The composition of the substitutions is the \textit{minimal solution}.
\subsection{Binding Time Analysis Correctness}
\begin{lemma} [Completeness]Consider a binding time type environment such that $\btenv(x_i) = \type_i^{b_i}$, and  $\rho(\scope,x_i) \in \text{Labels}$.  Then,
$\forall e,s,b$:
	\begin{itemize}
		\item $  \btenv  \vdash e:\type^b$ $\implies $ $\exists !\sigma.  \sigma\vdash_{min} |[e|]_{\scope, l, \rho}$ $\wedge \ \sigma(l) \nlt b$ 
		\item $\btenv  \vdash s^b$ $\implies$ $\exists!\sigma. \sigma \vdash_{min} |[s|]_{\scope,l,\rho}$ $\wedge\  \sigma(l) \nlt b$
	\end{itemize}
 \end{lemma}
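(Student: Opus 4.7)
The plan is to proceed by simultaneous structural induction on the derivations of $\btenv \vdash e:\type^b$ and $\btenv \vdash s^b$, following the typing rule applied at the root in each case and exhibiting an explicit witness substitution $\sigma'$ to the generated constraint set with $\sigma'(l) = b$. The existence of the claimed minimum solution then comes for free from Theorem~\ref{thm:minimal}: the mere existence of $\sigma'$ precludes exhaustive normalization from landing in $\CError$ (whose single defining constraint $\dyn \nlt \stat$ has no solution), so by the $\CNF$-or-$\CError$ dichotomy normalization lands in $\CNF$, whence Theorem~\ref{thm:minimal} yields a unique minimum $\sigma \vdash_{min} |[e|]_{\scope,l,\rho}$, and minimality applied to the witness $\sigma'$ gives $\vdash \sigma(l) \nlt \sigma'(l) = b$. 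Uniqueness of the minimum solution is exactly Lemma~\ref{lemma:unique-minimal}.

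First I would dispatch the leaf cases. For a constant the only generated constraint is $\{l \nlt \stat\}$, solved by $\sigma'(l) = \stat$. For an unprimed name $n$ with $\btenv(n) = \type^b$, the two constraints $\{\rho(\scope,n) \nlt l,\ l \nlt \rho(\scope,n)\}$ are solved by $\sigma'(l) = \sigma'(\rho(\scope,n)) = b$, well-defined because the hypothesis $\rho(\scope,x_i) \in \labels$ guarantees the label exists. For $x'$ one additionally sets $\sigma'(l_1) = \dyn$, consistent with the declarative premise $x:\tb{real}^\dyn \in \btenv$ requiring the corresponding unprimed variable to be dynamic.

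Next, for the compound cases I apply the IH to each sub-derivation, obtaining minimum substitutions $\sigma_j \vdash_{min} |[e_j|]_{\scope,l_j,\rho}$ with $\sigma_j(l_j) \nlt b_j$. Because constraint generation introduces fresh labels per sub-term the domains $\domain{\sigma_j}$ are pairwise disjoint, so their union extended by $l \mapsto b$ yields a candidate $\sigma'$. Verifying the additional constraints $l_j \nlt l$ then reduces to $b_j \nlt \sqcup b_j = b$, which holds, covering vector construction, function application, time and partial derivatives, directed and undirected equations, discrete assignments, and set-of-equations. The vector-indexing-by-literal case is handled specially using its custom typing rule $b = b_i$: the witness sets $\sigma'(l) = b_i$, and the constraints $l_1 \nlt l$, $l_2 \nlt l$ are verified using the fact that the outer index is $\stat$ and the indexed component has binding time $b_i$.

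The main obstacle will be the cases where the scope or typing environment changes: conditional equations swap $\scope$ for the branch labels $l_2$ and $l_3$, and universal quantification simultaneously extends $\rho$ by $(\scope,n) \mapsto l_1$ and $\btenv$ by $n:\type^{b_1}$. To treat these uniformly I would strengthen the induction hypothesis with a matching invariant relating $\rho$ to $\btenv$, namely that for every variable $x_i$ visible in $\btenv$ with binding time $b_i$ the witness satisfies $\sigma'(\rho(\scope,x_i)) = b_i$. This invariant is preserved under the $\forall$-extension because the fresh label $l_1$ is mapped to exactly the binding time $b_1$ assigned to $n$ in the typing extension, and the constraint generator's side condition $l_2 \nlt l_1$ is then automatically satisfied since by the inner IH $l_2$ receives a value $\nlt b_1$. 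With this invariant maintained, each compound case closes by the same pattern of collecting IH witnesses, extending by $l \mapsto b$, and invoking Theorem~\ref{thm:minimal} together with Lemma~\ref{lemma:unique-minimal}.
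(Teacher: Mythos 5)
The paper defers this proof to a supplementary document not included in the source, so there is no in-paper argument to compare against line by line; I can only assess your proposal on its own terms. Your overall architecture is the natural one and is mostly sound: induct on the typing derivation, exhibit an explicit solution $\sigma'$ of the generated constraint set, argue that exhaustive normalization therefore cannot end in $\CError$, invoke Theorem~\ref{thm:minimal} and Lemma~\ref{lemma:unique-minimal} for existence and uniqueness of the minimum solution, and extract $\sigma(l) \nlt \sigma'(l) \nlt b$ from minimality. Two small debts you leave unpaid: the step ``existence of a solution precludes $\CError$'' needs the converse of Lemma~\ref{lemma:composition}, namely that each rewrite in Fig.~\ref{fig:rewrite-rules} preserves \emph{solvability} in the forward direction (true, easy, but not one of the paper's lemmas); and your closing equality $\sigma'(l) = b$ should be $\sigma'(l) \nlt b$ (e.g.\ a constant may be typed with $b = \dyn$ while $l \nlt \stat$ forces $\sigma'(l) = \stat$), though transitivity of $\nlt$ repairs the final inference.

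The genuine gap is the vector-indexing-by-literal case, which you claim to verify but which breaks under your own strengthened invariant. The declarative rule gives $e_1^{b}(i^{\stat})^{b_i} : \type_i^{b_i}$, deliberately letting the result be \emph{more static} than the vector ($b_i \nlt b = \sqcup_j b_j$), whereas the constraint generator has only the uniform row for $e_1(e_2)$ and emits $l_1 \nlt l$, forcing the result to be at least as dynamic as the vector. Your matching invariant $\sigma'(\rho(\scope,x_j)) = b_j$ forces $\sigma'(l_1) = \dyn$ whenever some component of the vector is a dynamic variable, and then $l_1 \nlt l$ forces $\sigma'(l) = \dyn$, contradicting the intended assignment $\sigma'(l) = b_i = \stat$. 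Concretely, in $\set{z' = 1,\ w = \la 1, z\ra(1)}$ the declarative system can type the right-hand side of $w$ as $\tb{nat}^{\stat}$, while the constraints propagate $\dyn$ from $z$ through $l_1 \nlt l$ to the whole indexing expression, so no witness with $\sigma'(l) \nlt \stat$ exists and the induction does not close; asserting that the case is ``verified using the fact that the outer index is $\stat$'' does not address the offending constraint $l_1 \nlt l$. Relatedly, your two descriptions of the witness---the disjoint union of the sub-terms' \emph{minimum} solutions versus a solution satisfying $\sigma'(\rho(\scope,x_j)) = b_j$---do not coincide (the minimum solution of an isolated sub-constraint set maps an unconstrained $\rho(\scope,z)$ to $\stat$ even when $\btenv(z)$ is dynamic), so you must commit to the second construction throughout and then confront the indexing case honestly, either by special-casing literal indices in the constraint generation or by restricting the statement.
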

\begin{lemma} [Soundness] \label{lemma:faithful}
Consider a typing environment such that $\tenv(x_i) = \type_i$, and  $\rho(\scope,x_i) \in \text{Labels}$.  There exists a binding time environment 
such that $|\btenv| =  \tenv$ and 
 	\begin{itemize}
		\item $\forall \ e .$  $\tenv \vdash e:\type$ $\wedge$ $\sigma \vdash_{min} |[e|]_{\scope, l, \rho}$ $\implies$ $\btenv \vdash e:\type^{\sigma(l)}$
		\item $\forall \ s.$  $\tenv \vdash s$ $\wedge$ $\sigma \vdash_{min} |[s|]_{\scope, l, 
		\rho}$ $\implies$ $\btenv \vdash s^{\sigma(l)}$
 	\end{itemize}
  \end{lemma}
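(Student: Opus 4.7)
The plan is to prove both bullets simultaneously by structural induction on $e$ and $s$, after explicitly constructing $\btenv$ from the minimal solution $\sigma$ and the global environment $\rho$. For every variable $x_i$ with $\tenv(x_i) = \type_i$, I would set $\btenv(x_i) = \type_i^{\sigma(\rho(\scope, x_i))}$; the erasure condition $|\btenv| = \tenv$ is then immediate from this choice.

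The base cases are routine. A constant contributes only $l \nlt \stat$, so minimality of $\sigma$ gives $\sigma(l) = \stat$, and the matching declarative constant rule applies. An unprimed name $n$ generates the pair $\rho(\scope,n) \nlt l$ and $l \nlt \rho(\scope,n)$, forcing $\sigma(l) = \sigma(\rho(\scope,n))$, which is exactly the binding time assigned to $n$ in $\btenv$ by construction. A primed variable $x'$ carries the extra constraint $\dyn \nlt l_1$, where $l_1$ labels the inner occurrence of $x$; this forces $\sigma(l_1) = \dyn$ and thus satisfies the side condition in the declarative primed-variable rule.

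For the compound expression constructors $\sequence{e_j}$, $e_1(e_2)$, $\f(e)$, $\der{e}$, and $\pder{e_1}{e_2}$, and for the non-scope-changing equation constructors, the constraint generator produces $\bigcup_j \{ l_j \nlt l \}$ together with the recursively generated constraints on subterms. Applying the inductive hypothesis to each subterm yields $\btenv \vdash e_j : \type_j^{\sigma(l_j)}$. Since $\sigma$ is minimal and the outer label $l$ has no upper bound internal to its own locally generated subsystem, $\sigma(l) = \sqcup_j \sigma(l_j)$, which is exactly the join demanded by the corresponding declarative rule. The directed equation case $x = e$ additionally exploits the constraint $l_2 \nlt l_1$ between the right-hand-side label and the defining-label of $x$ to line up the inferred binding time of $e$ with the $\btenv$-assignment of $x$.

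The main obstacle lies in the two scope-sensitive constructs. For $\tb{if}\ e\ \tb{then}\ s_1\ \tb{else}\ s_2$, the generator switches scope to $l_2$ inside $s_1$ and to $l_3$ inside $s_2$, so the induction must be invoked in each branch's local scope and the resulting branch-local binding-time environments then shown to agree on every variable whose defining label was fixed in an enclosing scope; this agreement follows because $\labelE$ only overwrites $\rho$ at the current scope's inner labels. For $\forall n \in e.\ s$, the generator extends $\rho$ via $\rho[(\scope, n) \mapsto l_1]$; accordingly, I would apply the inductive hypothesis under the extended environments $\tenv, n : \type$ and $\btenv, n : \type^{\sigma(l_1)}$, then discharge the premise of the declarative quantifier rule. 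Well-definedness of the overall construction rests on disjointness of the fresh labels introduced by $\labelE$, which I expect to package as a short auxiliary lemma about the bookkeeping of the global environment.
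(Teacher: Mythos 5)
The paper does not actually contain a proof of this lemma---it is deferred to ``Section 2 of the supplementary document,'' which is not part of the source provided---so there is no in-paper argument to compare against, and I can only judge your proposal on its own terms. Your architecture (define $\btenv(x_i) = \type_i^{\sigma(\rho(\scope,x_i))}$ so that erasure is immediate, then induct on the structure of $e$ and $s$) is the natural one, and you are right to flag the scope-changing constructs as the delicate cases.

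There is, however, a concrete gap in the inductive step. Your induction hypothesis is the lemma itself, whose premise is $\sigma \vdash_{min} |[e_j|]_{\scope,l_j,\rho}$, i.e.\ minimality with respect to the \emph{subterm's own} constraint set. But minimality is not hereditary under restriction to a subset of constraints: the sets $|[e_j|]$ for different subterms share the definition labels $\rho(\scope,x)$, so a label can be forced to $\dyn$ by a sibling's constraints (say, a $\dyn \nlt l_1$ arising from a primed occurrence in another equation) even though the minimal solution of $|[e_j|]$ in isolation would set it to $\stat$. Hence the restriction of the globally minimal $\sigma$ to $\labelSet{|[e_j|]}$ need not satisfy $\vdash_{min} |[e_j|]$, and the IH as you state it cannot be invoked on subterms. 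This matters because the declarative rules demand \emph{exact} joins ($b = \sqcup b_j$), not mere upper bounds, so you genuinely need $\sigma(l) = \sqcup_j \sigma(l_j)$ at every node, and that equality must be extracted from minimality of $\sigma$ over the \emph{whole} constraint set (by exhibiting a competing solution that lowers $\sigma(l)$ to $\sqcup_j \sigma(l_j)$, leaves all other labels fixed, and checking it still solves every constraint in which $l$ occurs). The standard repair is to strengthen the induction statement to quantify over an ambient constraint set $C \supseteq |[e|]_{\scope,l,\rho}$ with $\sigma \vdash_{min} C$ and suitable conditions on how $l$ may occur in $C \setminus |[e|]_{\scope,l,\rho}$. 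Relatedly, your treatment of \texttt{if} acknowledges but does not resolve the mismatch between the single $\btenv$ shared by both branches in the declarative rule and the branch-indexed definition labels $\rho(l_2,\cdot)$, $\rho(l_3,\cdot)$ used by the generator: for a variable defined separately in each branch (as in the paper's own $x = 1,\ \tb{if}\ \ldots$ example, where $y$ gets different binding times in the two branches), your construction of $\btenv$ from the outer scope $\scope$ does not obviously produce the binding time that the branch-local constraints assign, so this case needs an explicit argument rather than a remark that the environments ``agree.''
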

The proofs for the two lemmas above are in Section 2 of the supplementary document.
\section{Conclusions and Future work}\label{section:conclusion}
In this paper we showed how the basic hybrid ODE formalism can be extended to support certain types of partial derivatives and equational constraints. The treatment is generic and so can be applied to any hybrid systems reachability analysis, and has been implemented in the context of the Acumen modeling language.
Interesting future work includes investigation of more advanced type systems\cite{freshML, cristiano06} that can ensure that a dynamic value is a variable so that we can build a type system that is able to detect all possible run-time errors that can interfere with the elimination of partial derivatives.

\bibliographystyle{eptcs}
\bibliography{generic}

\begin{thebibliography}{10}
\providecommand{\bibitemdeclare}[2]{}
\providecommand{\surnamestart}{}
\providecommand{\surnameend}{}
\providecommand{\urlprefix}{Available at }
\providecommand{\url}[1]{\texttt{#1}}
\providecommand{\href}[2]{\texttt{#2}}
\providecommand{\urlalt}[2]{\href{#1}{#2}}
\providecommand{\doi}[1]{doi:\urlalt{http://dx.doi.org/#1}{#1}}
\providecommand{\bibinfo}[2]{#2}

\bibitemdeclare{misc}{acumenURL}
\bibitem{acumenURL}
\emph{\bibinfo{title}{Acumen implementation}}.
\newblock
  \bibinfo{howpublished}{https://bitbucket.org/effective/acumen-dev/downloads}.

\bibitemdeclare{book}{Alur93}
\bibitem{Alur93}
\bibinfo{author}{Rajeev \surnamestart Alur\surnameend}, \bibinfo{author}{Costas
  \surnamestart Courcoubetis\surnameend}, \bibinfo{author}{Thomas~A
  \surnamestart Henzinger\surnameend} \& \bibinfo{author}{Pei-Hsin
  \surnamestart Ho\surnameend} (\bibinfo{year}{1993}):
  \emph{\bibinfo{title}{Hybrid automata: An algorithmic approach to the
  specification and verification of hybrid systems}}.
\newblock \bibinfo{publisher}{Springer}, \doi{10.1007/3-540-57318-6_30}.

\bibitemdeclare{inproceedings}{CHARON}
\bibitem{CHARON}
\bibinfo{author}{Rajeev \surnamestart Alur\surnameend}, \bibinfo{author}{Radu
  \surnamestart Grosu\surnameend}, \bibinfo{author}{Yerang \surnamestart
  Hur\surnameend}, \bibinfo{author}{Vijay \surnamestart Kumar\surnameend} \&
  \bibinfo{author}{Insup \surnamestart Lee\surnameend} (\bibinfo{year}{2000}):
  \emph{\bibinfo{title}{Modular Specification of Hybrid Systems in {CHARON}}}.
\newblock In: {\sl \bibinfo{booktitle}{Proceedings of the Third International
  Workshop on Hybrid Systems: Computation and Control}}, \bibinfo{series}{HSCC
  '00}, \bibinfo{publisher}{Springer-Verlag}, \doi{10.1007/3-540-46430-1_5}.

\bibitemdeclare{inproceedings}{bak}
\bibitem{bak}
\bibinfo{author}{Stanley \surnamestart Bak\surnameend}, \bibinfo{author}{Sergiy
  \surnamestart Bogomolov\surnameend}, \bibinfo{author}{Thomas~A \surnamestart
  Henzinger\surnameend}, \bibinfo{author}{Taylor~T \surnamestart
  Johnson\surnameend} \& \bibinfo{author}{Pradyot \surnamestart
  Prakash\surnameend} (\bibinfo{year}{2016}): \emph{\bibinfo{title}{Scalable
  static hybridization methods for analysis of nonlinear systems}}.
\newblock In: {\sl \bibinfo{booktitle}{Proceedings of the 19th International
  Conference on Hybrid Systems: Computation and Control}},
  \bibinfo{organization}{ACM}, pp. \bibinfo{pages}{155--164},
  \doi{10.1145/2883817.2883837}.

\bibitemdeclare{inproceedings}{hyst}
\bibitem{hyst}
\bibinfo{author}{Stanley \surnamestart Bak\surnameend}, \bibinfo{author}{Sergiy
  \surnamestart Bogomolov\surnameend} \& \bibinfo{author}{Taylor~T
  \surnamestart Johnson\surnameend} (\bibinfo{year}{2015}):
  \emph{\bibinfo{title}{HYST: a source transformation and translation tool for
  hybrid automaton models}}.
\newblock In: {\sl \bibinfo{booktitle}{Proceedings of the 18th International
  Conference on Hybrid Systems: Computation and Control}},
  \bibinfo{organization}{ACM}, pp. \bibinfo{pages}{128--133},
  \doi{10.1145/2728606.2728630}.

\bibitemdeclare{phdthesis}{goswamicompass}
\bibitem{goswamicompass}
\bibinfo{author}{Hsu \surnamestart Chen\surnameend} (\bibinfo{year}{2007}):
  \emph{\bibinfo{title}{Passive dynamic walking with knees: A point foot
  model}}.
\newblock Ph.D. thesis, \bibinfo{school}{Massachusetts Institute of
  Technology}.

\bibitemdeclare{inproceedings}{flowstar}
\bibitem{flowstar}
\bibinfo{author}{Xin \surnamestart Chen\surnameend}, \bibinfo{author}{Erika
  \surnamestart {\'A}brah{\'a}m\surnameend} \& \bibinfo{author}{Sriram
  \surnamestart Sankaranarayanan\surnameend} (\bibinfo{year}{2013}):
  \emph{\bibinfo{title}{Flow*: An analyzer for non-linear hybrid systems}}.
\newblock In: {\sl \bibinfo{booktitle}{International Conference on Computer
  Aided Verification}}, \bibinfo{organization}{Springer}, pp.
  \bibinfo{pages}{258--263}, \doi{10.1007/978-3-642-39799-8_18}.

\bibitemdeclare{article}{christensenaccurate}
\bibitem{christensenaccurate}
\bibinfo{author}{Niels~H. \surnamestart Christensen\surnameend} \&
  \bibinfo{author}{Robert \surnamestart Gl\"{u}ck\surnameend}
  (\bibinfo{year}{2004}): \emph{\bibinfo{title}{Offline Partial Evaluation Can
  Be As Accurate As Online Partial Evaluation}}.
\newblock {\sl \bibinfo{journal}{ACM Trans. Program. Lang. Syst.}}
  \bibinfo{volume}{26}(\bibinfo{number}{1}), pp. \bibinfo{pages}{191--220},
  \doi{10.1145/963778.963784}.

\bibitemdeclare{article}{cornellBiped}
\bibitem{cornellBiped}
\bibinfo{author}{Steve \surnamestart Collins\surnameend}, \bibinfo{author}{Andy
  \surnamestart Ruina\surnameend}, \bibinfo{author}{Russ \surnamestart
  Tedrake\surnameend} \& \bibinfo{author}{Martijn \surnamestart
  Wisse\surnameend} (\bibinfo{year}{2005}): \emph{\bibinfo{title}{Efficient
  bipedal robots based on passive-dynamic walkers}}.
\newblock {\sl \bibinfo{journal}{Science}}
  \bibinfo{volume}{307}(\bibinfo{number}{5712}), pp.
  \bibinfo{pages}{1082--1085}, \doi{10.1126/science.1107799}.

\bibitemdeclare{article}{tempo}
\bibitem{tempo}
\bibinfo{author}{Charles \surnamestart Consel\surnameend},
  \bibinfo{author}{Julia~L \surnamestart Lawall\surnameend} \&
  \bibinfo{author}{Anne-Fran{\c{c}}oise \surnamestart Le~Meur\surnameend}
  (\bibinfo{year}{2004}): \emph{\bibinfo{title}{A tour of Tempo: A program
  specializer for the C language}}.
\newblock {\sl \bibinfo{journal}{Science of Computer Programming}}
  \bibinfo{volume}{52}(\bibinfo{number}{1}), pp. \bibinfo{pages}{341--370},
  \doi{10.1016/j.scico.2004.03.011}.

\bibitemdeclare{incollection}{metaOCamel}
\bibitem{metaOCamel}
\bibinfo{author}{Krzysztof \surnamestart Czarnecki\surnameend},
  \bibinfo{author}{John~T \surnamestart O’Donnell\surnameend},
  \bibinfo{author}{J{\"o}rg \surnamestart Striegnitz\surnameend} \&
  \bibinfo{author}{Walid \surnamestart Taha\surnameend} (\bibinfo{year}{2004}):
  \emph{\bibinfo{title}{DSL implementation in MetaOCaml, Template Haskell, and
  C++}}.
\newblock In: {\sl \bibinfo{booktitle}{Domain-Specific Program Generation}},
  \bibinfo{publisher}{Springer}, pp. \bibinfo{pages}{51--72},
  \doi{10.1007/978-3-540-25935-0_4}.

\bibitemdeclare{phdthesis}{adam}
\bibitem{adam}
\bibinfo{author}{Adam \surnamestart Duracz\surnameend} (\bibinfo{year}{2016}):
  \emph{\bibinfo{title}{Rigorous Simulation: Its Theory and Applications}}.
\newblock Ph.D. thesis, \bibinfo{school}{Halmstad University Press}.

\bibitemdeclare{inproceedings}{phaver}
\bibitem{phaver}
\bibinfo{author}{Goran \surnamestart Frehse\surnameend} (\bibinfo{year}{2005}):
  \emph{\bibinfo{title}{PHAVer: Algorithmic verification of hybrid systems past
  HyTech}}.
\newblock In: {\sl \bibinfo{booktitle}{International workshop on hybrid
  systems: computation and control}}, \bibinfo{organization}{Springer}, pp.
  \bibinfo{pages}{258--273}, \doi{10.1007/s10009-007-0062-x}.

\bibitemdeclare{inproceedings}{spaceex}
\bibitem{spaceex}
\bibinfo{author}{Goran \surnamestart Frehse\surnameend}, \bibinfo{author}{Colas
  \surnamestart Le~Guernic\surnameend}, \bibinfo{author}{Alexandre
  \surnamestart Donz{\'e}\surnameend}, \bibinfo{author}{Scott \surnamestart
  Cotton\surnameend}, \bibinfo{author}{Rajarshi \surnamestart Ray\surnameend},
  \bibinfo{author}{Olivier \surnamestart Lebeltel\surnameend},
  \bibinfo{author}{Rodolfo \surnamestart Ripado\surnameend},
  \bibinfo{author}{Antoine \surnamestart Girard\surnameend},
  \bibinfo{author}{Thao \surnamestart Dang\surnameend} \& \bibinfo{author}{Oded
  \surnamestart Maler\surnameend} (\bibinfo{year}{2011}):
  \emph{\bibinfo{title}{SpaceEx: Scalable verification of hybrid systems}}.
\newblock In: {\sl \bibinfo{booktitle}{International Conference on Computer
  Aided Verification}}, \bibinfo{organization}{Springer}, pp.
  \bibinfo{pages}{379--395}, \doi{10.1007/978-3-642-22110-1_30}.

\bibitemdeclare{inproceedings}{dreal}
\bibitem{dreal}
\bibinfo{author}{Sicun \surnamestart Gao\surnameend}, \bibinfo{author}{Soonho
  \surnamestart Kong\surnameend} \& \bibinfo{author}{Edmund~M \surnamestart
  Clarke\surnameend} (\bibinfo{year}{2013}): \emph{\bibinfo{title}{dReal: An
  SMT solver for nonlinear theories over the reals}}.
\newblock In: {\sl \bibinfo{booktitle}{International Conference on Automated
  Deduction}}, \bibinfo{organization}{Springer}, pp. \bibinfo{pages}{208--214},
  \doi{10.1007/978-3-642-38574-2_14}.

\bibitemdeclare{inproceedings}{CMix}
\bibitem{CMix}
\bibinfo{author}{Arne~J. \surnamestart Glenstrup\surnameend},
  \bibinfo{author}{Henning \surnamestart Makholm\surnameend} \&
  \bibinfo{author}{Jens~P. \surnamestart Secher\surnameend}
  (\bibinfo{year}{1999}): \emph{\bibinfo{title}{C-MIX: Specialization of C
  Programs}}.
\newblock In: {\sl \bibinfo{booktitle}{Partial Evaluation - Practice and
  Theory, DIKU 1998 International Summer School}},
  \bibinfo{publisher}{Springer-Verlag}, \bibinfo{address}{London, UK, UK}, pp.
  \bibinfo{pages}{108--154}, \doi{10.1007/3-540-47018-2_4}.
\newblock \urlprefix\url{http://dl.acm.org/citation.cfm?id=645795.665921}.

\bibitemdeclare{article}{Gomard91}
\bibitem{Gomard91}
\bibinfo{author}{Carsten~K \surnamestart Gomard\surnameend} \&
  \bibinfo{author}{Neil~D \surnamestart Jones\surnameend}
  (\bibinfo{year}{1991}): \emph{\bibinfo{title}{A partial evaluator for the
  untyped lambda-calculus}}.
\newblock {\sl \bibinfo{journal}{Journal of functional programming}}
  \bibinfo{volume}{1}(\bibinfo{number}{01}), pp. \bibinfo{pages}{21--69},
  \doi{10.1017/S0956796800000058}.

\bibitemdeclare{inproceedings}{HyTech}
\bibitem{HyTech}
\bibinfo{author}{Thomas~A \surnamestart Henzinger\surnameend},
  \bibinfo{author}{Pei-Hsin \surnamestart Ho\surnameend} \&
  \bibinfo{author}{Howard \surnamestart Wong-Toi\surnameend}
  (\bibinfo{year}{1997}): \emph{\bibinfo{title}{HyTech: A model checker for
  hybrid systems}}.
\newblock In: {\sl \bibinfo{booktitle}{International Conference on Computer
  Aided Verification}}, \bibinfo{organization}{Springer}, pp.
  \bibinfo{pages}{460--463}, \doi{10.1007/s100090050008}.

\bibitemdeclare{incollection}{Hudak97}
\bibitem{Hudak97}
\bibinfo{author}{Paul \surnamestart Hudak\surnameend} (\bibinfo{year}{1998}):
  \emph{\bibinfo{title}{Domain Specific Languages}}.
\newblock In: {\sl \bibinfo{booktitle}{Handbook of Programming Languages, Vol.\
  III: Little Languages and Tools}}, chapter~\bibinfo{chapter}{3},
  \bibinfo{publisher}{MacMillan, Indianapolis}, pp. \bibinfo{pages}{39--60},
  \doi{10.1145/1925844.1922397}.

\bibitemdeclare{inproceedings}{Jones85}
\bibitem{Jones85}
\bibinfo{author}{Neil~D \surnamestart Jones\surnameend}, \bibinfo{author}{Peter
  \surnamestart Sestoft\surnameend} \& \bibinfo{author}{Harald \surnamestart
  S{\o}ndergaard\surnameend} (\bibinfo{year}{1985}): \emph{\bibinfo{title}{An
  experiment in partial evaluation: the generation of a compiler generator}}.
\newblock In: {\sl \bibinfo{booktitle}{Rewriting techniques and applications}},
  \bibinfo{organization}{Springer}, pp. \bibinfo{pages}{124--140},
  \doi{10.1007/3-540-15976-2_6}.

\bibitemdeclare{article}{cristiano06}
\bibitem{cristiano06}
\bibinfo{author}{Ik-Soon \surnamestart Kim\surnameend},
  \bibinfo{author}{Kwangkeun \surnamestart Yi\surnameend} \&
  \bibinfo{author}{Cristiano \surnamestart Calcagno\surnameend}
  (\bibinfo{year}{2006}): \emph{\bibinfo{title}{A Polymorphic Modal Type System
  for Lisp-like Multi-staged Languages}}.
\newblock {\sl \bibinfo{journal}{SIGPLAN Not.}}
  \bibinfo{volume}{41}(\bibinfo{number}{1}), pp. \bibinfo{pages}{257--268},
  \doi{10.1145/1111320.1111060}.

\bibitemdeclare{inproceedings}{dreach}
\bibitem{dreach}
\bibinfo{author}{Soonho \surnamestart Kong\surnameend}, \bibinfo{author}{Sicun
  \surnamestart Gao\surnameend}, \bibinfo{author}{Wei \surnamestart
  Chen\surnameend} \& \bibinfo{author}{Edmund \surnamestart Clarke\surnameend}
  (\bibinfo{year}{2015}): \emph{\bibinfo{title}{dReach: $\delta$-reachability
  analysis for hybrid systems}}.
\newblock In: {\sl \bibinfo{booktitle}{International Conference on Tools and
  Algorithms for the Construction and Analysis of Systems}},
  \bibinfo{organization}{Springer}, pp. \bibinfo{pages}{200--205},
  \doi{10.1007/978-3-662-46681-0_15}.

\bibitemdeclare{article}{Mernik05}
\bibitem{Mernik05}
\bibinfo{author}{Marjan \surnamestart Mernik\surnameend}, \bibinfo{author}{Jan
  \surnamestart Heering\surnameend} \& \bibinfo{author}{Anthony~M.
  \surnamestart Sloane\surnameend} (\bibinfo{year}{2005}):
  \emph{\bibinfo{title}{When and How to Develop Domain-specific Languages}}.
\newblock {\sl \bibinfo{journal}{ACM Comput. Surv.}}
  \bibinfo{volume}{37}(\bibinfo{number}{4}), pp. \bibinfo{pages}{316--344},
  \doi{10.1145/1118890.1118892}.

\bibitemdeclare{article}{Moggi97}
\bibitem{Moggi97}
\bibinfo{author}{Eugenio \surnamestart Moggi\surnameend}
  (\bibinfo{year}{1997}): \emph{\bibinfo{title}{A categorical account of
  two-level languages}}.
\newblock {\sl \bibinfo{journal}{Electronic Notes in Theoretical Computer
  Science}} \bibinfo{volume}{6}, p. \bibinfo{pages}{272},
  \doi{10.1016/S1571-0661(05)80155-0}.

\bibitemdeclare{inproceedings}{LMS}
\bibitem{LMS}
\bibinfo{author}{Tiark \surnamestart Rompf\surnameend} \&
  \bibinfo{author}{Martin \surnamestart Odersky\surnameend}
  (\bibinfo{year}{2010}): \emph{\bibinfo{title}{Lightweight modular staging: a
  pragmatic approach to runtime code generation and compiled DSLs}}.
\newblock In: {\sl \bibinfo{booktitle}{Acm Sigplan Notices}},
  \bibinfo{volume}{46}, \bibinfo{organization}{ACM}, pp.
  \bibinfo{pages}{127--136}, \doi{10.1007/s10990-011-9072-1}.

\bibitemdeclare{inproceedings}{VPP}
\bibitem{VPP}
\bibinfo{author}{Cherif \surnamestart Salama\surnameend},
  \bibinfo{author}{Gregory \surnamestart Malecha\surnameend},
  \bibinfo{author}{Walid \surnamestart Taha\surnameend}, \bibinfo{author}{Jim
  \surnamestart Grundy\surnameend} \& \bibinfo{author}{John \surnamestart
  O'Leary\surnameend} (\bibinfo{year}{2009}): \emph{\bibinfo{title}{Static
  consistency checking for verilog wire interconnects: using dependent types to
  check the sanity of verilog descriptions}}.
\newblock In: {\sl \bibinfo{booktitle}{Proceedings of the 2009 ACM SIGPLAN
  workshop on Partial evaluation and program manipulation}},
  \bibinfo{organization}{ACM}, pp. \bibinfo{pages}{121--130},
  \doi{10.1007/s10990-011-9072-1}.

\bibitemdeclare{inproceedings}{templateHaskell}
\bibitem{templateHaskell}
\bibinfo{author}{Tim \surnamestart Sheard\surnameend} \&
  \bibinfo{author}{Simon~Peyton \surnamestart Jones\surnameend}
  (\bibinfo{year}{2002}): \emph{\bibinfo{title}{Template meta-programming for
  Haskell}}.
\newblock In: {\sl \bibinfo{booktitle}{Proceedings of the 2002 ACM SIGPLAN
  workshop on Haskell}}, \bibinfo{organization}{ACM}, pp.
  \bibinfo{pages}{1--16}, \doi{10.1145/636517.636528}.

\bibitemdeclare{inproceedings}{freshML}
\bibitem{freshML}
\bibinfo{author}{Mark~R \surnamestart Shinwell\surnameend},
  \bibinfo{author}{Andrew~M \surnamestart Pitts\surnameend} \&
  \bibinfo{author}{Murdoch~J \surnamestart Gabbay\surnameend}
  (\bibinfo{year}{2003}): \emph{\bibinfo{title}{FreshML: Programming with
  binders made simple}}.
\newblock In: {\sl \bibinfo{booktitle}{ACM SIGPLAN Notices}},
  \bibinfo{volume}{38}, \bibinfo{organization}{ACM}, pp.
  \bibinfo{pages}{263--274}, \doi{10.1145/944746.944729}.

\bibitemdeclare{inproceedings}{Aaron2D}
\bibitem{Aaron2D}
\bibinfo{author}{Ryan~W. \surnamestart Sinnet\surnameend} \&
  \bibinfo{author}{Aaron~D. \surnamestart Ames\surnameend}
  (\bibinfo{year}{2009}): \emph{\bibinfo{title}{{2D bipedal walking with knees
  and feet: A hybrid control approach}}}.
\newblock In: {\sl \bibinfo{booktitle}{Conference on Decision and Control}},
  pp. \bibinfo{pages}{3200--3207}, \doi{10.1109/CDC.2009.5400503}.

\bibitemdeclare{article}{Sujeeth14}
\bibitem{Sujeeth14}
\bibinfo{author}{Arvind~K. \surnamestart Sujeeth\surnameend},
  \bibinfo{author}{Kevin~J. \surnamestart Brown\surnameend},
  \bibinfo{author}{Hyoukjoong \surnamestart Lee\surnameend},
  \bibinfo{author}{Tiark \surnamestart Rompf\surnameend},
  \bibinfo{author}{Hassan \surnamestart Chafi\surnameend},
  \bibinfo{author}{Martin \surnamestart Odersky\surnameend} \&
  \bibinfo{author}{Kunle \surnamestart Olukotun\surnameend}
  (\bibinfo{year}{2014}): \emph{\bibinfo{title}{Delite: A Compiler Architecture
  for Performance-Oriented Embedded Domain-Specific Languages}}.
\newblock {\sl \bibinfo{journal}{ACM Trans. Embed. Comput. Syst.}}
  \bibinfo{volume}{13}(\bibinfo{number}{4s}), pp.
  \bibinfo{pages}{134:1--134:25}, \doi{10.1145/2584665}.

\end{thebibliography}

\appendix

\section{Appendix}\label{appendix}

\begin{figure*}
\footnotesize
$
\begin{array}{| l | l |}
\multicolumn{1}{l}{\fbox{\text{Cam and Follower}}}   \\
\hline
&\\
x = (1.5-\frac{cos(\frac{\pi}{2} - \theta))}{2}) * (1 + \frac{cos2(\frac{\pi}{2} - \theta)}{5}) &
\verb   \texttt{x = (1.5-cos(pi/2-t)/2)*(1+cos(2*(pi/2-t)/5))}\\ 
&\\
\ddot{\theta} = 1  \quad v = \pder{x}{\theta} \  \dot{\theta} \quad a = \dot{v} &
\verb   \texttt{t'' = 1, v = x'[t]*t', a = (v)'} \\
&\\
\hline
\multicolumn{2}{c}{}\\ 
\multicolumn{2}{c}{}\\  
\multicolumn{2}{l}{\fbox{\text{Compass Gait Biped}}}   \\ 
\hline
&\\

q = [\theta_1, \theta_2] \ \ m_1 = 1  \ \ m_2 = 2 \ \  l = 1   &
\verb   \texttt{q = (t1,t2), m1 = 1, l = 1, m2 =2,}\\
&\\
\gamma = 0.044 \quad g = 9.8 & \texttt{r = 0.044, g = 9.8,}\\
&\\
x_1 = \frac{1}{2}lsin\theta_1 \ \ y_1 = \frac{1}{2}lcos\theta_1  \ \ &
\verb   \texttt{x1 = l/2*sin(t1),y1 =l/2*cos(t1),}\\
&\\
 x_2 = lsin\theta_2\ \ y_2 = lcos\theta_2 &
\verb   \texttt{x2 = l*sin(t2), y2 = l*cos(t2),}\\
&\\
 x_3 = x_2+\frac{l}{2}sin(\theta_2-\theta_1)\ \  y_3 = y_2-\frac{l}{2}cos(\theta_2-\theta_1)  &
\verb   \texttt{x3=x2+l/2*sin(t2-t1),y3=y2-l/2*cos(t2-t1),}\\
&\\

L = T -V  \quad guard = lsin\gamma(sin\theta_1 + sin(\theta_2 - \theta_1) )& \verb   \texttt{L=T-V,guard=l*sin(r)*(sin(t1)+sin(t2-t1)),}\\
&\\
T=\frac{1}{2}m_1(\dot{x_1}^2 + \dot{y_1}^2 +  \dot{x_3}^2 + \dot{y_3}^2)&
\verb  \texttt{T = 1/2*m1*((x1)'\^{}2+(y1)'\^{}2+(x3)'\^{}2+(y3)'\^{}2)}\\
&\\
\qquad + \frac{1}{2}m_2(\dot{x_2}^2 + \dot{y_2}^2) &
\verb  \texttt{\qquad + 1/2*m2*((x2)'\^{}2 + (y2)'\^{}2),}\\ 
&\\
V= m_1g(y_1+y_3) + m_2gy_2 & \verb   \texttt{V = m1*g*(y1+y3)+m2*g*y2,  }\\
&\\
 H = H_1^{-1}\cdot H_2\cdot[\dot{\theta_1}^{-},\dot{\theta_2}^{-}]^{T} &
 \texttt{H = inv(H1)*H2*trans((t1',t2')),}\\
 &\\
\forall i\in\{1...|q|)\}. \frac{d}{dt} \left(\frac{\partial L}{\partial \dot{q_i}}\right)
 - \frac{\partial L}{\partial q_i}=0 &  \texttt{foreach i in 0:length(q) - 1 do}\\
 &
     \texttt{L'[(q(i))']' - L'[q(i)] = 0 ,} \\
 &\\
 H_1 = & \texttt{H1 = }\\
   \text{[} m_1l^{2}(\frac{5}{4}-\frac{cos\theta_2^{-}}{2})+m_2l^2  \quad
			  	 \frac{m_1}{4}l^2(1-2cos\theta_2^{-}) &  
				 \texttt{((m1*l\^{}2*((5/4-cos(t2)/2)+m2*l\^{}2)}, \\
& \hspace{1.2in} \texttt{m1/4*l\^{}2*(1-2*cos(t2))),}\\
&\\
 \frac{m_1}{2}l^2cos\theta_2^{-} \hspace{1.4in} \frac{m_1}{4}l^2  \text{]} &  \texttt{( m1/2*l\^{}2*cos(t2), 
\hspace{0.6in} m1/4*l\^{}2))},\\ 
&\\
 H_2 = [-\frac{m_1}{4}l^2+(m_2l^2+m1l^2)cos\theta_2^{-} \quad
			  	 -\frac{m_1}{4}l^2 &
\texttt{H2 =}\\
&\texttt{((-m1/4*l\^{}2+(m2*l\^{}2+m1*l\^{}2)*cos(t2),-m1/4*l\^{}2),}\\
\qquad \qquad \frac{m_1}{4}l^2 \hspace{1.4in}0 \ ] 
		    & \verb  \texttt{\qquad\qquad( m1/4*l\^{}2, \qquad\qquad\qquad\qquad\qquad\qquad\qquad0),}  \\	   
&\\
 \text{if} \ \  guard <0 \wedge \dot{guard} < 0\ \ \text{then} & \verb  \texttt{\text{if} guard <0 \&\& (guard)' \text{then}}\\
  \quad \theta_1^{+} = \theta_2^{-} - \theta_1^{-} \quad \theta_2^{+} = -\theta_2^{-} \quad  
 & \texttt{\quad t1 += t2 - t1, t2 += -t2,}  \\
  \quad \dot{\theta_1}^{+} = H(0)\quad \dot{\theta_2}^{+} = H(1) & \texttt{\quad t1' += H(0), t2' += H(1) \text{noelse}}\\
  &\\
\hline
\end{array}
$
\caption{Two Examples Problems in Mathematical Notation and in Acumen Syntax}
\label{fig:armcode}
\end{figure*}

\begin{figure*}
\footnotesize
\centering
\begin{tabular}{ l >{$}c<{$} >{$}l<{$}  >{$}l<{$}  }
\hspace{0in}Value & \vb&::=&\const^{\stat} \ | \ e^{\dyn} \ | \  \sequence{v_j^{b_j}}\\

\hspace{0in}Normal Form Equation & \wb &::=&  (x^b = \vb)^b \ | \ ({x^{+}}^{b} = \vb)^b \ | \  (\vb = \vb)^b  \ | \ 
                                                 (\tb{if} \ \vb\ \tb{then} \ \wb \ \tb{else} \ \wb)^b
                                        		  \ | \ (\sets{w_j^{b_j}})^b \\
 \multicolumn{4}{l}{ $\hspace{0in}\text{Function application} 
 \hspace{0.26in} |[\f(\sequence{v_j^{\stat}}) |] = v^\stat \ \ \text{such that} \ \ v^\stat \equiv \f(\sequence{v_j})^\stat$}\\
\multicolumn{4}{l}{ $\hspace{0in}\text{Time derivative} \hspace{0.48in} \tf(\sequence{v_j^{\dyn}},\sequence{v_j'^{\dyn}})= v^\dyn \ \ \text{such that} \ \ v^{\dyn} \equiv (\der{\f(\sequence{v_j^{\dyn}},\sequence{v_j'^{\dyn}})})^\stat$}\\
 \multicolumn{4}{l}{$\hspace{0in}\text{Partial derivative} \hspace{0.42in} \pf(\sequence{v_j^{\dyn}},\sequence{v_j'^{\dyn}},x^\dyn) = v^\dyn \ \ \text{such that} \ \ v^{\dyn} \equiv \pder{\f(\sequence{v_j^{\dyn}},\sequence{v_j'^{\dyn}})^\dyn}{x}$}\\
 \multicolumn{4}{l}{\fbox{$\text{Free Variables}$} }\\
&&&\\
\multicolumn{4}{l}{ $FV(x^b) = \setof{x} \quad  FV(\sequence{{e_j}^{b_j}}) = \bigcup_j FV(e_j^{b_j})  \quad
			       FV((f(\sequence{e_j^{b_j}})^b)) =\bigcup_j FV(e_j^{b_j}) 
			       \ \ LV(x = e ) = \{x\}  $}\\
&&&\\			 
 \multicolumn{4}{l}{$FV(e_1^{b}(e_2^{b})^{b}) = FV(e_1^b) \cup FV(e_2^b) \ \  
 			      FV(\der(e^b)) = FV(e^b) \ \   FV(\pder{e_1^b}{e_2^b}) = FV(e_1^b) \cup FV(e_2^b) \ \  LV(\sets{s_j} ) = \bigcup_j LV(s_j)$} \\
\end{tabular}

\[
\begin{array}{ l  }
\multicolumn{1}{l}{\fbox{$e^{b} \ereduce \vb \cup \{\err\}$}}
\\
\infer{\const^{b} \ereduce \const^b}{} \quad
\infer{(\sequence{e_j^{b_j}})^{b} \ereduce (\sequence{\vtwo{j}{b_j}})^b}
	  {e_j^{b_j} \ereduce \vtwo{j}{b_j} }\quad
 \infer{e_1^{b}(e_2^{\stat})^{b} \ereduce v_i^{b}}
         {e_1^{b} \ereduce (\sequence{v_j^{b}})^b & e_2^{\stat} \ereduce i^{\stat}}\quad
\infer{\f(e^{\stat})^{\stat} \ereduce |[\f(\sequence{v_j})|] }{e^{\stat} \ereduce (\sequence{v_j^{\stat}})^\stat } \\
\infer{e_1^{b}(e_2^{\dyn})^{\dyn} \ereduce \vtwo{1}{b}(\vtwo{2}{\dyn})^{\dyn}}
         {e_1^{b} \ereduce \vtwo{1}{b} & e_2^{\dyn} \ereduce \vtwo{2}{\dyn}} \quad

\infer{x^{\dyn} \ereduce x^\dyn}{ }\quad
 \infer{\f(e^{\dyn})^{\dyn} \ereduce\f(v^{\dyn})^{\dyn} }
         {e^{\dyn} \ereduce v^{\dyn} } \quad
\infer{\der(e^{\stat})^{\stat} \ereduce 0^\stat}{e^{\stat} \ereduce q^\stat} \quad
\infer{\der(e^{b})^{b} \ereduce x'^b}{e^{b} \ereduce x^{b}} \quad
\infer{( \pder{e_1^{b}}{e_2^{\dyn}} )^{\stat} \ereduce 0^\dyn}{e_1^{b} \ereduce q_1^{\dyn} & e_2^{\dyn} \ereduce x^{\stat} } \\
\infer{( \pder{e_1^{\dyn}}{e_2^{\dyn}} )^{\dyn} \ereduce i^\dyn}{e_1^{\dyn} \ereduce x_1^{\dyn} & e_2^{\dyn} \ereduce x_2^{\dyn} \\
	\multicolumn{2}{c}{ i^\dyn = \begin{cases} 1^{\dyn} \ \mbox{if} \ x_1 = x_2 \\ 0^{\dyn} \ \mbox{otherwise}\end{cases}}} 

\quad

\infer{\der(e^{\dyn})^{\dyn} \ereduce \tf(\vectoromit{v_j^{b_j}}^\dyn,\vectoromit{{v_j}'^{b_j}}^\dyn)^\dyn}
	{e^{\dyn} \ereduce f((\sequence{v_j^{b_j}})^\dyn)^\dyn & (\der{v_j^{b_j}})^{\dyn} \ereduce {v_j}'^{b_j}} \quad	  
\infer{\pder{e_1^{\dyn}}{e_2^{\dyn}} \ereduce  \pf(\vectoromit{v_j^{b_j}}^\dyn,\vectoromit{{v_j}'^{b_j}}^\dyn,x^\dyn)^\dyn}
   	{\multicolumn{3}{c}{e_1^{\dyn} \ereduce  f((\sequence{v_j^{b_j}})^\dyn)^\dyn} \\ e_2^{\dyn} \ereduce x^\dyn 
	& (\pder{v_j^{b_j}}{x^\dyn})^{b_j} \ereduce {v_j}'^{b_j} }
	
 \end{array}\]
\[
\begin{array}{ l  }
\multicolumn{1}{l}{\fbox{$s^{b} \sreduce \done \cup \{\err\}$}} \\ 
\\
  \infer{\{x^{b}=e^{b}\}^{b} \uplus \sbs \sreduce  \{x^b=\vb\} \uplus \dones}
  	    { FV(e^b)\cap LV(\sbs) = \emptyset & e^b \ereduce \vb & \sbs[x^b := \vb] \sreduce \dones} \quad
  \infer{\{{x^{+}}^{b_1}=e^{b}\}^{b} \uplus \sbs \sreduce  \{{x^{+}}^{b_1}=\vb\}^b \uplus \dones}
  	    {  FV(e^{b})\cap LV(\sbs) = \emptyset& e^b \ereduce \vb & \sbs \sreduce \dones} \\
\\
  \infer{\{e_1^{b_1}=e_2^{b_2}\}^{b} \uplus\sbs\sreduce  \{ \vtwo{1}{b_1}= \vtwo{2}{b_2}\}^b \uplus \dones}
  	    {  \multicolumn{3}{c}{FV(e_1^{b_1})\cup FV(e_2^{b_2})\cap LV(\sbs) = \emptyset} \\ e_1^{b_1} \ereduce \vtwo{1}{b_1} &
  	     e_2^{b_2} \ereduce \vtwo{2}{b_2}&\sbs\sreduce \dones} \quad \quad
 \infer{ \{\tb{if}\  e^{\stat} \ \tb{then}\  s_1^{b_1} \ \tb{else}\  s_2^{b_2}\}^b \uplus\sbs\sreduce w_j^{b_j} \uplus \dones}
 { \multicolumn{4}{l}{e^{\stat} \ereduce t_j^{\stat} \ \   s_j^{b_j} \sreduce w_j^{b_j} \ \ \sbs\sreduce \dones}\\
   FV(e^\stat)\cap LV(\sbs) = \emptyset &
 t_1^{\stat} = \tb{true}^{\stat} & t_2^{\stat} = \tb{false}^{\stat} }\\
 \\ 
 \qquad \qquad \qquad \infer{ \{\tb{if}\  e^{\dyn} \ \tb{then}\  s_1^{b_1} \ \tb{else}\  s_2^{b_2}\}^b \uplus\sbs\sreduce 
  			\{\tb{if}\  v^{\dyn} \ \tb{then}\  \dones_1 \ \tb{else}\  \dones_2\}^b \uplus \dones}
   {\multicolumn{3}{c}{FV(e^{\dyn})\cap LV(\sbs) = \emptyset \quad  e^{\dyn} \ereduce v^{\dyn}}\\ 
     s_1^{b_1} \sreduce \dones_1 & s_2^{b_2} \sreduce \dones_2&\sbs\sreduce \dones}\\ \\
     
\qquad \qquad \qquad \qquad    \infer{\{\forall n \in e^{b} \ s_1^{b_1}\}^{b_1} \uplus\sbs\sreduce  (\{w_j^{b_1}\}^{j \in 1..n})^{b_1} \uplus \dones}
      {\multicolumn{2}{c}{FV(e^b)\cap LV(\sbs) = \emptyset\ \  e^{b} \ereduce  (\sequence{\vtwo{j}{b}})^b} \\
       \multicolumn{2}{c}{ (\{s_1^{b_1}[n := v_j^{b}]\}^{j \in 1..m})^{b_1} \sreduce (\{w_j^{b_1}\}^{j \in 1..m})^{b_1}} \\ \sbs\sreduce \dones}\\

\end{array}
\]

\centering\caption{Specialization big-step semantics}\label{fig:semantics}
\end{figure*}

\end{document}